\pgfplotsset{compat=1.18}
\providecommand{\keywords}[1]
{
  \small	
  \textbf{\textit{Keywords:}} #1
}
\newcommand{\el}{\ensuremath{\ell}\xspace}
\newcommand{\suc}{\ensuremath{\succ}\xspace}
\newcommand{\FPT}{\ensuremath{\mathsf{FPT}}\xspace}
\newcommand{\NPH}{\ensuremath{\mathsf{NP}}-hard\xspace}
\newcommand{\DisKRA}{\textsc{Distinct Kemeny Ranking Aggregation}\xspace}
\newcommand{\DisKRAprime}{\textsc{Distinct Kemeny Ranking Aggregation}\ensuremath{^\prime}\xspace}
\newcommand{\DisOPTKRA}{\textsc{Distinct OPT Kemeny Ranking Aggregation}\xspace}
\newcommand{\DisApproxKRA}{\textsc{Distinct approximate Kemeny Ranking Aggregation}\xspace}
\newcommand{\NB}{\ensuremath{\mathbb N}\xspace}
\newcommand{\BB}{\ensuremath{\mathcal B}\xspace}
\newcommand{\CC}{\ensuremath{\mathcal C}\xspace}
\newcommand{\LL}{\ensuremath{\mathcal L}\xspace}
\newcommand{\OO}{\ensuremath{\mathcal O}\xspace}
\newcommand{\PP}{\ensuremath{\mathcal P}\xspace}
\newcommand{\RR}{\ensuremath{\mathcal R}\xspace}
\renewcommand{\SS}{\ensuremath{\mathcal S}\xspace}
\newcommand{\TT}{\ensuremath{\mathcal T}\xspace}
\newcommand{\XX}{\ensuremath{\mathcal X}\xspace}
\newtheorem{theorem}{Theorem}
\newtheorem{lemma}{Lemma}
\newtheorem{case}{Case}
\newtheorem{definition}{Definition}
\newtheorem{observation}{Observation}
\newtheorem{corollary}{Corollary}
\newenvironment{proof}{\noindent \textit{Proof.}}{\hfill$\square$}
\title{Parameterized Aspects of Distinct Kemeny Rank Aggregation}
\author[1]{Koustav De \\ email: \href{mailto:koustavde7@gmail.com}{koustavde7@gmail.com}}
\author[2]{\\Harshil Mittal \\ email: \href{mailto:mittal_harshil@iitgn.ac.in}{mittal\_harshil@iitgn.ac.in}}
\author[3]{\\Palash Dey \\ email: \href{mailto:palash.dey@cse.iitkgp.ac.in}{palash.dey@cse.iitkgp.ac.in}}
\author[4]{\\Neeldhara Misra \\ email: \href{mailto:neeldhara.m@iitgn.ac.in}{neeldhara.m@iitgn.ac.in}}
\affil[1,3]{Department of Computer Science and Engineering, Indian Institute of Technology Kharagpur}
\affil[2,4]{Department of Computer Science and Engineering, Indian Institute of Technology Gandhinagar}
\begin{document}
\maketitle
\begin{abstract}
The Kemeny method is one of the popular tools for rank aggregation. However, computing an optimal Kemeny ranking is \NPH. Consequently, the computational task of finding a Kemeny ranking has been studied under the lens of parameterized complexity with respect to many parameters. We first present a comprehensive relationship, both theoretical and empirical, among these parameters. Further, we study the problem of computing all distinct Kemeny rankings under the lens of parameterized complexity. We consider the target Kemeny score, number of candidates, average distance of input rankings, maximum range of any candidate, and unanimity width as our parameters. For all these parameters, we already have \FPT algorithms. We find that any desirable number of Kemeny rankings can also be found without substantial increase in running time. We also present \FPT approximation algorithms for Kemeny rank aggregation with respect to these parameters.
\end{abstract}
\keywords{Diversity, Voting, Kemeny, Kendall-Tau}

\section{Introduction}

Aggregating individual ranking over a set of alternatives into one societal ranking is a fundamental problem in social choice theory in particular and artificial intelligence in general. Immediate examples of such applications include aggregating the output of various search engines~\cite{DBLP:conf/www/DworkKNS01}, recommender systems~\cite{DBLP:conf/aaai/PennockHG00}, etc. The Kemeny rank aggregation method is often the method of choice in such applications due to its many desirable properties like Condorcet consistency that is electing the Condorcet winner (if it exists), etc. A Condorcet winner is a candidate who defeats every other candidate in pairwise election. The Kemeny method outputs a ranking $R$ with minimum sum of dissatisfaction of individual voters known as {\em Kemeny score} of $R$; the dissatisfaction of a voter with ranking $P$ with respect to $R$ is quantified as the number of pairs of candidates that $P$ and $R$ order differently~\cite{10.2307/20026529}. This quantity is also called the Kendall-Tau distance between $P$ and $R$. A ranking with minimum Kemeny score is called the Kemeny ranking.

The computational question of finding optimal Kemeny rankings is intractable in very restricted settings (for instance, even with a constant number of voters). Therefore, it has been well-studied from both approximation and parameterized perspectives. A problem is said to be \emph{fixed-parameter tractable} or \FPT with respect to a parameter $k$ if it admits an algorithm whose running time can be described as $f(k)\cdot n^{O(1)}$ where the input size is $n$, implying that the algorithm is efficient for instances where the parameter is ``small''~\cite{DBLP:books/sp/CyganFKLMPPS15}. For the Kemeny rank aggregation problem, the following parameters (among others) have enjoyed attention in the literature:

\begin{itemize}
    \item \emph{Range.} The range of a candidate in a profile is the difference between its positions in the voters who rank him/her the lowest and the highest~\cite{DBLP:journals/tcs/BetzlerFGNR09}. The maximum and average range of a profile is defined as, respectively, the maximum and average ranges of individual candidates. Profiles which are ``homogeneous'', $i.e.$ where most candidates are viewed somewhat similarly by the voters, are likely to have low values for range, while a single polarizing candidate can skew the max range parameter considerably.
    \item \emph{KT-distance.} The average (respectively, maximum) KT distance is the average (respectively, maximum) of the Kendall-Tau distances between all pairs of voters~\cite{DBLP:journals/tcs/BetzlerFGNR09}. Recall that the KT distance between a pair of rankings is the number of pairs that are ordered \emph{differently} by the two rankings under consideration. 
\end{itemize}

A pair of candidates are said to be \emph{unanimous} with respect to a voting profile if all votes rank them in the same relative order. Consider the following ``unanimity graph'' associated with a profile $P$ and defined as follows: every candidate is represented by a vertex, and there is an edge between a pair of candidates if and only if they are unanimous with respect to the profile. We use $G_P$ to denote this graph. Note that the structure of the \emph{complement} of this graph, denoted $\overline{G_P}$, carries information about candidates about whom the voters are not unanimous in their opinion. In particular, for every pair of candidates $a$ and $b$ that have an edge between them in the complement of the unanimity graph, there is at least one voter who prefers $a$ over $b$ and at least one who prefers $b$ over $a$. Thus every edge signals a lack of consensus, and one could think of the number of edges in this graph as a measure of the distance of the profile from an ``automatic consensus'', which is one that can be derived from the information about unanimous pairs alone. Motivated by this view, we propose and consider also the following structural parameters:

\begin{itemize}

\item \emph{Consensus Distance.} This is simply the number of edges in the complement of the unanimity graph $\overline{G_P}$.
\item \emph{Blocking Size.} This is the size of the largest \emph{clique} --- which is a collection of mutually adjacent vertices --- in the complement of the unanimity graph $\overline{G_P}$. It represents the largest number of candidates that the profile collectively finds mutually incomparable. 
\item \emph{Unanimity width.} It is the pathwidth of the unanimity graph $\overline{G_{P}} \text{ }i.e.$ the co-comparability graph or the complement of the comparability graph of the unanimity(partial) order of the input which is the specific order on the pairs of candidates on which all the voters agree, as studied by~\cite{DBLP:conf/ijcai/ArrighiFLO021} and it turns out to be a structural measure of how close the existing consensus in the input profile is to a complete ranking. The definition of pathwidth comes next.
\end{itemize}

The relationship between range and KT-distances is reasonably well understood, and these parameters are largely mutually incomparable. Our first contribution in this work is to extend these comparisons to the three parameters defined above, namely the consensus distance, blocking size, and unanimity width. The pathwidth of the complement of the unanimity graph turns out to be ``sandwiched'' between these two new parameters (consensus distance, blocking size) that we have proposed: it is at least the size of the largest independent set of the unanimity graph, and at most the number of edges in it. We compare these parameters and study them from an empirical perspective. We evaluate their values on various profiles sampled using a Mallows model on an assumed consensus.


Our second contribution concerns enumerating optimal Kemeny rankings. In recent times, there is considerable research interest in finding a set of diverse optimal or near-optimal solutions of an optimization problem. Indeed, it is often difficult to encode all aspects of a complex system into a neat computational problem. In such scenarios, having a diverse set of optimal solutions for a problem $\Gamma$ allows the user to pick a solution which meets other aspects which are not captured in $\Gamma$. In the context of rank aggregation, such other external constraints may include gender fairness, demographic balance, etc. For the Kemeny rank aggregation method, Arrighi et al.~\cite{arrighi2020width} present a parameterized algorithm to output a set of diverse Kemeny rankings with respect to unanimity width as the parameter.

However, note that external requirements are often independent of the constraints in the optimization problem, and consequently they may not be correlated with diversity based on distance parameters. In particular, for useful externalities like gender fairness or geographic balance --- these features of the candidates may not have any relation with their position in the voters' rankings, and therefore, diversity \emph{between} solutions may not imply diversity \emph{within} any of the solutions. This becomes particularly stark when most near-optimal rankings do not meet the external requirements. Indeed, there is a substantial literature that considers the problem of accounting for these requirements explicitly, and studies trade-offs between optimality of solutions and the degree to which demands of diversity can be met.

In this contribution, we shift our focus from finding diverse solutions to finding as many \emph{distinct} solutions as possible. Enumerating solutions is a fundamental goal for any optimization problem. The literature on counting optimal Kemeny rankings is arguably limited considering that even finding one is hard in very restricted settings, and that instances could have exponentially many rankings --- which would be too expensive to enumerate. Indeed, consider a profile that consists of two votes over $m$ candidates, where one vote ranks the candidates in lexicographic order and the other ranks the candidates in reverse lexicographic order. For this instance, every ranking is an optimal ranking. However, note that real world preferences often have additional structure: for example, profiles with an odd number of voters that are single-peaked~\cite{DBLP:conf/ijcai/CornazGS13} or single-crossing~\cite{DBLP:conf/ijcai/CornazGS13} have unique optimal solutions. To address scenarios where the number of optimal solutions is large, we allow the user to specify the number $r$ of optimal solutions that she wants the algorithm to output. In our problem called \DisOPTKRA, the input is a set of rankings over a set of candidates and an integer $r$, and we need to output $\max\{r, \text{number of optimal solutions}\}$ Kemeny rankings.


\subsection{Our Contributions}

\paragraph*{Experimental Results} We establish comprehensively relationships between all pairs of the following parameters: (a) maximum range, (b) average KT distance, (c) unanimity width, (d) blocking size, and (e) consensus distance. We also evaluate the values of these parameters on several profiles sampled using the Mallows model with various dispersion parameters. Intuitively, all of these parameters are proportional to the \emph{heterogeneity} of the profiles: in other words, more ``similar looking'' profiles have smaller values for these parameters. We are able to quantify this empirically by showing that the parameters decrease as we increase the dispersion of the Mallows distribution we sample from. It turns out that the higher the dispersion, the more the probability mass is concentrated around votes ``close to'' a central ranking. 

\paragraph*{Algorithms for Distinct Kemeny Rank Aggregation} The first parameter that we consider is the optimal Kemeny score $k$, also called the {\em standard parameter}. Many applications of rank aggregation, for example, faculty hiring, etc. exhibit correlation among the individual rankings --- everyone in the committee may tend to prefer some candidate with strong academic background than some other candidate with weak track record. In such applications, the optimal Kemeny score $k$, average Kendall-Tau distance $d$ (a.k.a. Bubble sort distance) among input rankings, maximum range of the positions of any candidate $r_{\text{max}}$, and unanimity width $w$ will be small, and an \FPT algorithm becomes useful. We show that there is an algorithm for \DisOPTKRA running in time $\OO^*\left(2^k\right)$~[\Cref{fpt:k}]. We next consider the number of candidates, $m$ as the parameter and present an algorithm running in time $\OO^*\left(2^m r^{\OO(1)}\right)$~[\Cref{fpt:m}] where $r$ is the required number of solutions. For $d$ and $r_{\text{max}}$, we present algorithms with running time $\OO^*(16^d)$ and $\OO^*\left(32^{r_{\text{max}}}\right)$~[\Cref{fpt:d,fpt:rmax}] respectively. Our last parameter is the unanimity width $w$ which is the pathwidth of the co-comparability graph of the unanimity order and we present an algorithm running in time $\mathcal{O}^{*}\left(2^{\mathcal{O}(w)}\cdot r\right)$~[\Cref{unanimitywidthalgo}].

Some instances may have a few optimal solutions, but have many close-to-optimal solutions. To address such cases, we study the \DisApproxKRA problem where the user gives a real number $\lambda\ge 1$ as input and looks for $\max\{r, \text{number of optimal solutions}\}$ rankings with Kemeny score at most $\lambda$ times the optimal Kemeny score. For this problem, we design algorithms with running time $\OO^*\left(2^{\lambda k}\right)$~[\Cref{cor:fpt-k}], $\OO^{*}\left( 2^{m}r^{\OO\left( 1 \right)} \right)$~[\Cref{corollary:Dis_Approx_fpt_m}] and $\OO^*\left(16^{\lambda d}\right)$~[\Cref{fpt:approx_d}].

We observe that the running time of all our algorithms are comparable with the respective parameterized algorithms for the problem of finding one Kemeny ranking. We note that this phenomenon is in sharp contrast with the diverse version of Kemeny rank aggregation where we have an \FPT algorithm only for unanimity width as the parameter. Also, the running time of the algorithm for the diverse version is significantly more than the standard non-diverse version~\cite{DBLP:conf/ijcai/ArrighiFLO021}. 


\subsection{Related Work}
Kemeny rule \cite{10.2307/20026529} shows us its most significant and popular mechanism for ranking aggregation. However, Bartholdi et al. \cite{bartholdi1989voting} have established that \textsc{Kemeny Score} is NP-complete even if we apply the restriction of having only four input rankings \cite{dwork2001rank}. Fixed-parameter algorithms for Kemeny voting rule have been proved to be an effective and important area for research by Betzler et al. \cite{DBLP:journals/tcs/BetzlerFGNR09} considering structural parameterizations such as ``number of candidates'', ``solution size \textit{i.e.} Kemeny Score'', ``average pairwise distance'', ``maximum range'', ``average range'' of candidates in an election. A multi-parametric algorithm for \textsc{Diverse Kemeny Rank Aggregation} over partially ordered votes has been studied in \cite{DBLP:conf/ijcai/ArrighiFLO021}. A small error in the construction proof from \cite{dwork2001rank} has been rectified by Biedl et al. \cite{biedl2009complexity}  and they have established the approximation factor of $2 - 2/k$,  improving from the previous approximation factor of 2. 

Further classification in more exact manner of the classical computational complexity of Kemeny elections has been provided by Hemaspaandra et al. \cite{hemaspaandra2005complexity}. With respect to the practical relevance of the computational hardness of the \textsc{Kemeny Score}, polynomial-time approximation algorithms have been developed where a factor of $8 / 5$ is seen in \cite{van2009deterministic} and a factor of $11 / 7$ is proved in \cite{ailon2008aggregating}. As we can see a polynomial-time approximation scheme (PTAS) developed by Kenyon-Mathieu and Schudy \cite{kenyon2007rank} is not very practical, we can refer to various approximation algorithms and heuristics provided by Schalekamp and van Zuylen \cite{schalekamp2009rank} as an evaluation. Some greedy techniques and branch-and-bound methods can be seen from Conitzer, Davenport and Kalagnanam \cite{davenport2004computational,conitzer2006improved} where the authors have performed their studies heuristically. From \cite{DBLP:conf/aaai/PennockHG00,DBLP:conf/www/DworkKNS01} we can verify the methods used for merging results from various search engines and the notion of collaborative filtering. 

Polynomial time algorithms producing good solutions for rank aggregation rule is a consequence of thorough computational studies \cite{DBLP:conf/waoa/ZuylenW07,DBLP:journals/jacm/AilonCN08}. Cornaz et al. \cite{DBLP:conf/ijcai/CornazGS13} have established polynomial time computability of the single-peaked and single-crossing widths and have proposed new fixed-parameter tractability results for the computation of an optimal ranking according to the Kemeny rule by following the results of Guo te al. \cite{DBLP:conf/iwpec/GuoHN04}. In social choice theory \cite{klamler2004dodgson,bartholdi1989voting}, the ideas related to diverse sets of solutions have found tremendous applicability. The study in \cite{zwicker2018cycles} introduced the $(j,k)$-Kemeny rule which is a generalization of Kemeny's voting rule that aggregates ballots containing weak orders with $j$ indifference classes into a weak order with $k$ indifference classes. In social choice theory, different values of $j$ and $k$ yield various rules of the interest of the community turning up as special cases. The minimum Kendall-Tau distance between pairs of solutions has a nice analogy with $min$ Hamming distance over all pairs of solutions as shown in \cite{hebrard2005finding,hebrard2007distance}.
\section{Preliminaries}
For an integer $\el$, we denote the set $\{ 1,\ldots,\el \}$ by $\left[ \el \right]$. For two integers $ a,b $, we denote the set $\{ i \in \NB : a \le i \le b \}$ by $\left[ a,b \right]$. Given two integer tuples $\left( x_1, \ldots , x_\el \right), \left( y_1 , \ldots , y_\el \right) \in \NB^\el $, we say $\left( x_1, \ldots, x_\el \right) >_{ \text{lex} } \left( y_1, \ldots, y_\el \right)$ if there exists an integer $i \in \left[ \el \right] $ such that we have (i) $x_j=y_j$ for every $ j \in \left[ i-1 \right]$, and (ii) $x_i > y_i$.

Let $\CC$ be a set of candidates and $R=\left\{\pi_{1}, \ldots, \pi_{r}\right\}$ a multi-set of rankings (complete orders) on $\CC$. For a ranking $\pi$ and a candidate $c$ let us define $\text{pos}_\pi(c)$ to be $|\left\{ c^{\prime} \in \CC : c^{\prime} \succ_{\pi} c\right\}|$. We define the {\em range} $r$ of $c$ in a set of rankings $\Pi$ to be $\max\limits_{\pi_{i}, \pi_{j} \in \Pi} \left\{ |\text{pos}_{\pi_{i}}\left( c \right) - \text{pos}_{\pi_{j}}\left( c \right)| \right\} + 1$. We denote the set of all complete orders over $\CC$ by $\LL(\CC)$. The Kemeny score of a ranking $Q\in\LL(\CC)$ with respect to $R$ is defined as
\[ \text{Kemeny}_R(Q) = \sum_{i=1}^r \text{d}_{\text{KT}} (Q, \pi_i)\]
where d$_\text{KT}(\cdot,\cdot)$ is the Kendall-Tau distance -- the number of pairs of candidates whom the linear orders order differently -- between two linear orders, and $N_R(x>y)$ is the number of linear orders in $R$ where $x$ is preferred over $y$. A Kemeny ranking of $R$ is a ranking $Q$ which has the minimum $\text{Kemeny}_R(Q)$; the score $\text{Kemeny}_R(Q)$ is called the optimal Kemeny score of $R$.

We now define our problems formally. For a set of rankings $\Pi$, we denote the set of (optimal) Kemeny rankings and rankings with Kemeny score at most some integer $k$ for $\Pi$ respectively by $K(\Pi)$ and $K(\Pi,k)$, and the minimum Kemeny score by $k_{\text{OPT}}(\Pi)$.

\begin{definition}(\DisOPTKRA).
Given a set of rankings (complete orders) $\Pi$ over a set of candidates $\CC$ and integer $r$, compute $\el=\min\{r,|K(\Pi)|\}$ distinct Kemeny rankings $\pi_{1}, \ldots, \pi_{\el}$. We denote an arbitrary instance of it by $(\CC,\Pi,r)$.
\end{definition}

For a set of rankings $\Pi$ over a set of candidates $\CC$, we say that a complete order $\pi$ respects unanimity order if we have $x\suc_\pi y$ whenever $x\suc y$ for all $\suc\in\Pi$.

\begin{definition}(\DisApproxKRA).
Given a set of ranking (complete order) $\Pi$ over a set of candidates $\CC$, an approximation factor $\lambda\ge 1$, and integer $r$, compute $\el=\min\{r,|K(\Pi,\lambda\cdot k_{\text{OPT}}(\Pi))|\}$ distinct rankings $\pi_{1}, \ldots, \pi_{\el}$ such that each ranking $\pi_i, i\in[\el]$ respects unanimity order with respect to $\Pi$ and the Kemeny score of each ranking $\pi_i, i\in[\el]$ is at most $\lambda \cdot k_{\text{OPT}}(\Pi)$. We denote an arbitrary instance of it by $(\CC,\Pi,\lambda,r)$.
\end{definition}

\begin{definition}(\DisKRA).
Given a list of partial votes $\Pi$ over a set of candidates $\CC$, and integers $k$ and $r$, compute $\el=\min\{r,|K(\Pi,k)|\}$ distinct rankings $\pi_{1}, \ldots, \pi_{\el}$ such that the Kemeny score for each ranking $\pi_{i}$ is at most $k$ and each $\pi_i, i\in [\el]$ respects unanimity order? We denote an arbitrary instance of it by $(\CC,\Pi,k,r)$.
\end{definition}

We use $\OO^*(\cdot)$ to hide polynomial factors. That is, we denote $\OO(f(k)\text{poly}(n))$ as $\OO^*(f(k))$ where $n$ is the input size.

We define a path decomposition of a graph $G = \left(V, E \right)$ by a tuple $\PP = \left( \BB_{i}\right)_{i \in \left[ t  \right]}$ where each bag $\BB_{i} \subseteq V$, $t$ is the number of bags in $\PP$ and $\PP$ satisfies the following additional constraints : (1) $\bigcup_{i\in \left[ t \right]} \BB_{i} = V$, (2) $\exists i \in \left[ t \right]$ such that $u, v \in \BB_{i} ~\text{for each } \left( u, v \right) \in E$ and (3) $\BB_{i} \cap \BB_{k} \subseteq \BB_{j}$ for each $i, j, k \in \left[ t \right]$ satisfying $i < j < k$. The width of $\PP$ denoted by $w\left( \PP \right)$ is defined as $\max_{i \in \left[ t \right]} |\BB_{i}| - 1$. The $pathwidth$ of $G$ is denoted by $pw\left( G \right)$ which is defined as the minimum width of a path decomposition of $G$.

For a dispersion parameter $\theta> 0$ and a central ordering $\pi_0$ of a set  $\CC$ of candidates, the \emph{Mallows model} associates the following probability mass with each ordering $\pi$ of $\CC$:
$$\frac{e^{-\theta\cdot \big(\mbox{d}_{\text{KT}}(\pi,\pi_0)\big)}}{\underset{\tiny{\substack{\sigma: \sigma \in \LL(\CC)}}}{\Sigma}e^{-\theta\cdot \big(\mbox{d}_{\text{KT}}(\sigma,\pi_0)\big)}}.$$

\section{Parameters and Experiments}
\def\arraystretch{1.25}
\begin{table*}[htbp]
\begin{center}
\begin{tabular}{llllll}
                      & range & average KT distance & unanimity width   & blocking size     & distance to consensus \\ \hline
range                 &       & $\star$ $\dagger$   & $\geq$ $\dagger$  & $\star$ $\dagger$ & $\leq$ $\star$        \\ \hline
average KT distance   &       &                     & $\star$ $\dagger$ & $\star$ $\dagger$ & $\leq$ $\star$        \\ \hline
unanimity width       &       &                     &                   & $\geq$ $\dagger$  & $\leq$ $\star$        \\ \hline
blocking size         &       &                     &                   &                   & $\leq$ $\star$        \\ \hline
distance to consensus &       &                     &                   &                   &                       \\ \hline
\end{tabular}
\caption{Relationships between parameters. Read the entries in the table as follows: the row label, followed by the entry, followed by the column label. A ``$\star$'' is to be read as ``can be arbitrarily smaller than''; while a ``$\dagger$'' is to be read as ``can be arbitrarily larger than''. The signs are to be read as is, but is a slight abuse of notation in the sense that there may be constant factors involved in the inequalities.}
\label{tab:param-compare-1}
\end{center}
\end{table*}
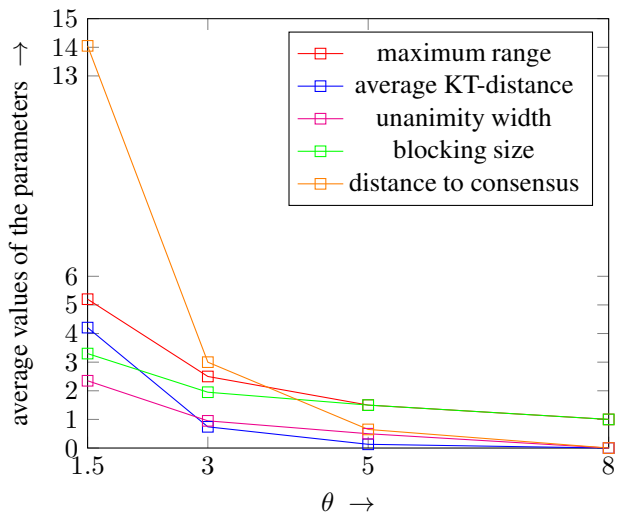
\begin{figure}[htbp]
\begin{tikzpicture}
\begin{axis}[
    title={},
    xlabel={$\theta~\rightarrow$},
    ylabel={average values of the parameters$~\rightarrow$},
    xmin=1.5, xmax=8,
    ymin=0, ymax=15,
    xtick={1.5,3,5,8},
    ytick={0,1,2,3,4,5,6,13,14,15},
    legend pos=north east,
]

\addplot[
    color=red,
    mark=square,
    ]
    coordinates {
    (1.5,5.200)(3,2.500)(5,1.500)(8,1.000)
    };
    \addlegendentry{maximum range}

\addplot[
    color=blue,
    mark=square,
    ]
    coordinates {
    (1.5,4.207)(3,0.741)(5,0.130)(8,0.000)
    };
    \addlegendentry{average KT-distance}

\addplot[
    color=magenta,
    mark=square,
    ]
    coordinates {
    (1.5,2.350)(3,0.950)(5,0.500)(8,0.000)
    };
    \addlegendentry{unanimity width}

\addplot[
    color=green,
    mark=square,
    ]
    coordinates {
    (1.5,3.300)(3,1.950)(5,1.500)(8,1.000)
    };
    \addlegendentry{blocking size}

\addplot[
    color=orange,
    mark=square,
    ]
    coordinates {
    (1.5,14.050)(3,3.000)(5,0.650)(8,0.000)
    };
    \addlegendentry{distance to consensus}
\end{axis}
\end{tikzpicture}
\caption{This plot illustrates that for $200$ votes and $10$ candidates, the average (over $20$ samples) values of the parameters drop with increase in $\theta$.}
\label{param-compare-1}
\end{figure}

The parameter comparisons are summarized in \Cref{tab:param-compare-1}. Here we briefly justify the claims implicit to \Cref{tab:param-compare-1}, noting that the incomparability of range and average KT distance was established already in~\cite{DBLP:journals/tcs/BetzlerFGNR09}, and we do not repeat those examples here. The following constructions imply the remaining claims.

 Let $p$ be an integer. Consider a profile with $2p$ candidates, say $c_1, \ldots,c_p, d_1, \ldots,d_p$. We have two votes: $c_1> \cdots >c_p>d_1> \cdots >d_p$, and $d_1> \cdots >d_p>c_1> \cdots >c_p$. Note that the complement of the unanimity graph is a complete bipartite graph, with each part having size $p$. Here, the unanimity width is $p$, the range is $p$, the distance to consensus is $p^2$, but the blocking size is $2$. This shows that the blocking size can be arbitrarily smaller than these remaining parameters. 

Consider a profile with two votes $a>b>c>d>\cdots$ and $b>a>d>c>\cdots$; i.e, where consecutive candidates are swapped. The complement of the unanimity graph is a matching on $m/2$ edges, the average KT distance is $m/2$, while the range is $2$ and the unanimity width is $1$.

Consider a profile where a particular vote is repeated $(n-1)$ times and the last vote is the reversal of the common vote. This profile has constant average KT distance, but all remaining parameters are functions of $m$. 

Next, we turn to the bounds. Note that the claim that the consensus distance is an upper bound for the blocking size and unanimity width follows directly from graph-theoretic definitions. Further, the claim that is an upper bound for the average KT distance follows from the fact that the average KT distance is at least the minimum KT distance, which is witnessed by some specific pair of votes: but these manifest directly as edges in the complement of the unanimity graph. 

The claim that the distance to consensus is an upper bound for the maximum range follows from the fact that if the maximum range is $r$, then there are at least $r$ non-unanimous pairs in the profile, each of which contributes an edge to the complement of the unanimity graph. The fact that the unanimity width is at least the blocking size is also a standard graph-theoretic fact (the pathwidth is lower bounded by the clique size). The fact that the pathwidth is upper bounded by the range can be observed by constructing an appropriate path decomposition over $m$ bags, where the $i$-th bag contains all candidates whose range contains the $i$-th position. It is easily verified that this is a valid path decomposition whose width is $O(r)$.

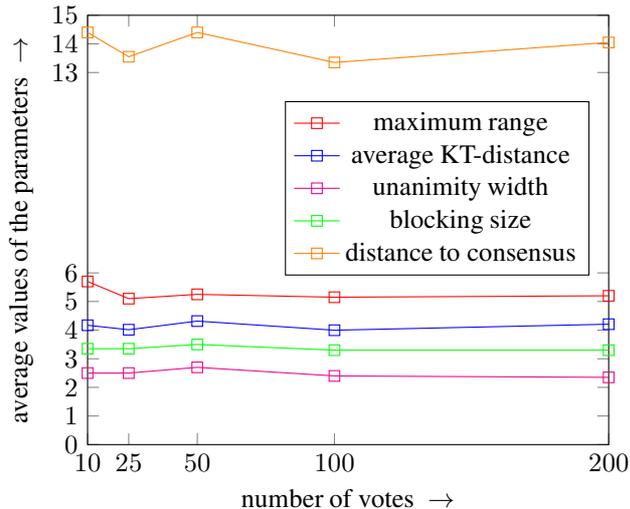
\begin{figure}
\begin{tikzpicture}
\begin{axis}[
    title={},
    xlabel={number of votes$~\rightarrow$},
    ylabel={average values of the parameters$~\rightarrow$},
    xmin=10, xmax=200,
    ymin=0, ymax=15,
    xtick={10,25,50,100,200},
    ytick={0,1,2,3,4,5,6,13,14,15},
    legend style={at={(axis cs:193,12)},anchor=north east}
]

\addplot[
    color=red,
    mark=square,
    ]
    coordinates {
    (10,5.700)(25,5.100)(50,5.250)(100,5.150)(200,5.200)
    };
    \addlegendentry{maximum range}

\addplot[
    color=blue,
    mark=square,
    ]
    coordinates {
    (10,4.170)(25,4.014)(50,4.316)(100,3.997)(200,4.207)
    };
    \addlegendentry{average KT-distance}

\addplot[
    color=magenta,
    mark=square,
    ]
    coordinates {
    (10,2.500)(25,2.500)(50,2.700)(100,2.400)(200,2.350)
    };
    \addlegendentry{unanimity width}

\addplot[
    color=green,
    mark=square,
    ]
    coordinates {
    (10,3.350)(25,3.350)(50,3.500)(100,3.300)(200,3.300)
    };
    \addlegendentry{blocking size}

\addplot[
    color=orange,
    mark=square,
    ]
    coordinates {
    (10,14.400)(25,13.550)(50,14.400)(100,13.350)(200,14.050)
    };
    \addlegendentry{distance to consensus}
\end{axis}
\end{tikzpicture}
\caption{This plot illustrates that for $\theta=1.5$ and $10$ candidates, the average (over $20$ samples) values of the parameters do not change much with increase in the number of votes.}
\label{param-compare-2}
\end{figure}

\paragraph*{Experimental Setup.} We computed the values of the five parameters, namely (a) maximum range, (b) average KT distance, (c) unanimity width, (d) blocking size, and (e) consensus distance, on profiles with 10 candidates and 10, 25, 50, 100, and 200 voters. We used two inbuilt functions of SageMath, namely clique\_number() and treewidth(), to compute blocking size and unanimity width respectively. The remaining parameters were computed with a direct implementation.

Each value reported is averaged over 20 samples. We also varied the dispersion parameter between the values 1.5, 3, 5, and 8. Our main observation from the empirical data was twofold: first, the values of all parameters dropped as we increased the dispersion parameter, which is as one would expect, since a higher dispersion parameter gives us more homogeneous profiles (cf. \Cref{param-compare-1}); and second, for a fixed dispersion parameter, the values of the parameters did not change much between small and large profiles (i.e, variations in the number of votes did not lead to large variations in the parameter, cf. \Cref{param-compare-2}). 

\def\arraystretch{1.25}
\begin{table}[htbp]
\begin{tabular}{llllll}
\#Voters            & 10 & 25 & 50 & 100 & 200 \\ \hline
maximum range                 &    5.700 &  5.100  & 5.250   &  5.150   &   5.200  \\ \hline
average Kendall-Tau distance      &   4.170 & 4.014   &   4.316 &  3.997   & 4.207    \\ \hline
unanimity width       & 2.500   &   2.500 & 2.700   &  2.400   & 2.350    \\ \hline
blocking size         &  3.350  & 3.350   &  3.500  &  3.300   &  3.300   \\ \hline
distance to consensus &   14.400 &  13.550  & 14.400   &   13.350  &  14.050  
\end{tabular}
\caption{The values of the parameters we consider for $\theta=1.5$.}
\label{table:theta-A}
\end{table}

\def\arraystretch{1.25}
\begin{table}[htbp]
\begin{tabular}{llllll}
\#Voters        & 10 & 25 & 50 & 100 & 200 \\ \hline
maximum range                 & 3.050   & 2.950    &    2.850 &  2.850   &  2.500   \\ \hline
average Kendall-Tau distance      & 1.028   & 0.883   &  0.863  &  0.896   &   0.741  \\ \hline
unanimity width       &  1.300  &  1.150  &  1.150  & 1.050    & 0.950    \\ \hline
blocking size         &    2.300 &  2.150  & 2.150   &    2.050 &   1.950  \\ \hline
distance to consensus &   4.650 &  3.950  &  3.750  & 3.850    &  3.000  
\end{tabular}
\caption{The values of the parameters we consider for $\theta=3$.}
\label{table:theta-B}
\end{table}

\def\arraystretch{1.25}
\begin{table}[htbp]
\begin{tabular}{llllll}
 \#Voters         & 10 & 25 & 50 & 100 & 200 \\ \hline
maximum range                 & 1.650   &   1.700 &  1.600  & 1.650    &   1.500  \\ \hline
average Kendall-Tau distance      &  0.130  &   0.180 & 0.166   &  0.168   &   0.130  \\ \hline
unanimity width       &   0.650 &    0.550 &  0.600  &  0.650   &   0.500  \\ \hline
blocking size         &  1.650  &  1.550  & 1.600   & 1.650    &  1.500   \\ \hline
distance to consensus &   0.650 &  0.900  &  0.750  &   0.800  &    0.650
\end{tabular}
\caption{The values of the parameters we consider for $\theta=5$.}
\label{table:theta-C}
\end{table}

\def\arraystretch{1.25}
\begin{table}[htbp]
\begin{tabular}{llllll}
\#Voters        & 10 & 25 & 50 & 100 & 200 \\ \hline
maximum range                 &    1.000 & 1.050   & 1.050   &  1.150   & 1.000    \\ \hline
average Kendall-Tau distance      & 0.000   & 0.010   &  0.010  &  0.030   &  0.000   \\ \hline
unanimity width       &  0.000  & 0.050   &   0.050 &  0.150   &  0.000   \\ \hline
blocking size         & 1.000   &  1.050  &  1.050  &   1.150  &  1.000   \\ \hline
distance to consensus &   0.000 &  0.050  &  0.050  &     0.150 &   0.000 
\end{tabular}
\caption{The values of the parameters we consider for $\theta=8$.}
\label{table:theta-D}
\end{table}

\section{Algorithms for \DisKRA}

We start with an easy Turing reduction from \DisOPTKRA to \DisKRA.

\begin{observation}\label{obs:gen_OPT}
Suppose there exists an algorithm for \DisKRA running in time $\OO(f(m,n))$. Then there exists an algorithm for \DisOPTKRA running in time $\OO(f(m,n)\log(mn))$.
\end{observation}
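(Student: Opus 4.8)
The plan is to reduce the optimization problem \DisOPTKRA to the threshold problem \DisKRA by locating the optimal Kemeny score via binary search and then making one final enumeration call. Observe first that a set of complete orders is, in particular, a list of partial votes, so the hypothesized \DisKRA algorithm applies verbatim to any instance $(\CC,\Pi,r)$ of \DisOPTKRA once we supply a threshold $k$.

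The engine of the reduction is to use \DisKRA as a decision oracle. For a candidate threshold $k$, I would invoke the assumed algorithm on $(\CC,\Pi,k,1)$: it returns $\min\{1,|K(\Pi,k)|\}$ rankings, hence a nonempty answer exactly when some unanimity-respecting ranking has Kemeny score at most $k$. I claim this happens if and only if $k_{\text{OPT}}(\Pi)\le k$. The nontrivial direction relies on the standard fact that every optimal Kemeny ranking respects the unanimity order: if all voters place $x$ before $y$ but some optimal $Q$ places $y$ immediately before $x$, then swapping the two strictly decreases the score (the swap affects only the pair $\{x,y\}$, on which all $n$ voters then agree with $Q$), a contradiction; the general non-adjacent case follows by a sequence of such adjacent swaps. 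Consequently, whenever $k_{\text{OPT}}(\Pi)\le k$ there is a unanimity-respecting ranking, namely any optimal one, of score $k_{\text{OPT}}(\Pi)\le k$, so the oracle answers nonempty; the converse is immediate. Thus the oracle correctly decides the predicate ``$k_{\text{OPT}}(\Pi)\le k$''.

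Since the Kemeny score $\sum_{\pi\in\Pi}\ktdis(Q,\pi)$ is a sum of at most $n$ Kendall--Tau distances, each bounded by $\binom{m}{2}$, it is an integer lying in $[0,\,n\binom{m}{2}]$. I would therefore binary search for the least $k$ at which the oracle answers nonempty; this value is precisely $k_{\text{OPT}}(\Pi)$, and the search terminates after $\OO(\log(n\binom{m}{2}))=\OO(\log(mn))$ oracle calls. Having identified $k_{\text{OPT}}(\Pi)$, I make one concluding call \DisKRA$(\CC,\Pi,k_{\text{OPT}}(\Pi),r)$. Because $k_{\text{OPT}}(\Pi)$ is the minimum score, the rankings of score at most $k_{\text{OPT}}(\Pi)$ are exactly the optimal ones, and by the fact above all of them respect unanimity; hence $K(\Pi,k_{\text{OPT}}(\Pi))=K(\Pi)$ and $\min\{r,|K(\Pi,k_{\text{OPT}}(\Pi))|\}=\min\{r,|K(\Pi)|\}$. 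The returned list is therefore a valid output for \DisOPTKRA. Each of the $\OO(\log(mn))$ invocations costs $\OO(f(m,n))$, giving total time $\OO(f(m,n)\log(mn))$, as required.

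The only step requiring genuine justification, and hence the main obstacle, is the claim that optimal Kemeny rankings respect unanimity; everything else is bookkeeping about the score range and the logarithmic query count. I would make the exchange argument precise so as to secure both that the decision oracle is faithful and that restricting attention to unanimity-respecting rankings at the optimum discards no optimal solution. This double role is exactly what lets \DisKRA, despite carrying the extra unanimity constraint, serve as an exact substitute for the unconstrained optimal enumeration sought by \DisOPTKRA.
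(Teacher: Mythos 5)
Your proposal is correct and follows essentially the same route as the paper's own proof: binary search for $k_{\text{OPT}}$ over the integer score range $[0,\, n\binom{m}{2}]$ using the assumed \DisKRA algorithm as a decision oracle (that is, $\OO(\log(mn))$ calls), followed by one enumeration call at $k=k_{\text{OPT}}$ with the desired $r$. The unanimity subtlety you isolate is genuinely needed (the paper's two-line proof glosses over it, since \DisKRA's definition does impose the unanimity constraint on outputs), but be aware that the fact you invoke --- every optimal Kemeny ranking respects the unanimity order --- is Lemma 1 of Betzler et al.~\cite{DBLP:journals/tcs/BetzlerFGNR09}, cited elsewhere in the paper, and its proof is not a plain sequence of adjacent swaps (an intermediate swap can increase the score, since voters may be split on the intermediate pairs); instead one compares the move that places $x$ just above $y$ with the move that places $y$ just below $x$, and checks that the two score changes sum to at most $-2n$, so at least one of the two moves strictly decreases the score, contradicting optimality.
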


\begin{proof}
We note that the optimal Kemeny score belongs to the set $\{0,1,\ldots,n{m\choose 2}\}$. To solve \DisOPTKRA, we perform a binary search in the range from $0$ to $n{m\choose 2}$ to find the smallest $k$ such that the algorithm for \DisKRA returns at least one ranking.
\end{proof}

We now present a bounded search based \FPT algorithm for \DisKRA parameterized by the optimal Kemeny score. Hence, we also have an \FPT algorithm for \DisOPTKRA parameterized by the optimal Kemeny score.

\begin{theorem}\label{fpt:k}
 Let $k$ be the Kemeny score of a Kemeny ranking. There is an \FPT algorithm for \DisKRA parameterized by $k$ which runs in time $\OO^*\left(2^{k}\right)$. Hence, we have an \FPT algorithm for \DisOPTKRA parameterized by $k_{\text{OPT}}$ which runs in time $\OO^*\left(2^{k_{\text{OPT}}}\right)$.
\end{theorem}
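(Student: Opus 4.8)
The plan is to give a bounded-search-tree algorithm for \DisKRA parameterized by the target score $k$. The structural fact driving everything is the familiar observation (due to Betzler et al.) that in any profile, for every pair of candidates $x,y$ on which the voters are \emph{not} unanimous, any ranking $Q$ must ``disagree'' with at least one voter on the pair $\{x,y\}$, so each non-unanimous pair contributes at least $1$ to $\text{Kemeny}_R(Q)$. Since the output ranking must respect the unanimity order, the only freedom lies in how the non-unanimous pairs are oriented, and each such pair whose orientation in $Q$ conflicts with the majority direction costs extra. More precisely, I would first compute, for every ordered pair $(x,y)$, the ``disagreement cost'' $\min\{N_R(x>y),N_R(y>x)\}$ incurred no matter how $Q$ orders them; summing these over all non-unanimous pairs gives a lower bound $L$ on the optimal score, and if $L>k$ we reject immediately. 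This both bounds the number of non-unanimous pairs by $k$ and sets up the branching.

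First I would set up a search procedure that builds a ranking (equivalently, a tournament/total order) consistent with the unanimity order. The key device is the standard ``dirty pair'' branching: so long as the partial order built so far is not yet a total order consistent with a low-cost completion, there must be a non-unanimous triple or pair witnessing a choice. Concretely, following the Kemeny-score FPT approach, I would look for a pair of candidates placed in adjacent positions by the majority relation that are ordered differently by some voter (a conflicting pair), and branch on the two possible relative orders of that pair in $Q$. Each branch in which we go \emph{against} the majority on a pair increases the committed cost by at least $1$, so the depth of the search tree is at most $k$, giving at most $2^k$ leaves. At each leaf we either have forced a complete ranking respecting unanimity with score $\le k$, or the committed cost has exceeded $k$ and we abandon that branch.

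Second, to turn a \emph{decision/search} tree into an \emph{enumeration} algorithm that outputs $\el=\min\{r,|K(\Pi,k)|\}$ \emph{distinct} rankings, I would ensure the branching is \emph{exhaustive and non-redundant}: the two branches on each dirty pair partition the solution space (one forces $x\succ_Q y$, the other $y\succ_Q x$), so distinct leaves yield disjoint sets of completions and no ranking is generated twice. I would prune any branch whose committed partial order is already infeasible or whose lower bound exceeds $k$, and I would stop the whole search as soon as $r$ distinct rankings have been collected. Because each surviving leaf corresponds to a full linear extension of the committed order of cost $\le k$, reading off one ranking per leaf (and continuing to the next leaf until $r$ are found) produces exactly $\min\{r,|K(\Pi,k)|\}$ distinct rankings, with only polynomial overhead per node for maintaining the partial order and recomputing the lower bound. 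This yields the claimed $\OO^*(2^k)$ running time.

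\textbf{The hard part} will be guaranteeing \emph{distinctness} and completeness simultaneously while keeping the $2^k$ bound: I must argue that every ranking of score $\le k$ is reachable by some sequence of branching decisions (so that if $|K(\Pi,k)|\ge r$ we really find $r$ of them), yet that the branches are disjoint enough that I never recount or re-output the same ranking, and that the committed cost is monotone so the depth bound of $k$ genuinely holds. The delicate point is the choice of branching pair: I must select a conflicting pair whose two orientations each strictly commit progress, so that the ``against-majority'' orientation provably adds $\ge 1$ to the cost and the ``with-majority'' orientation still reduces the remaining freedom — otherwise the tree depth is not controlled by $k$. Finally, by \Cref{obs:gen_OPT}, plugging this \DisKRA algorithm into the binary search gives the \FPT algorithm for \DisOPTKRA in time $\OO^*(2^{k_{\text{OPT}}})$, the extra $\log(mn)$ factor being absorbed into the $\OO^*$ notation.
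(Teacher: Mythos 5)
Your overall route is the same as the paper's: a bounded search tree that starts from the unanimity order, repeatedly branches on an as-yet-undecided non-unanimous pair into its two possible orientations, collects the ranking determined at each surviving leaf, and then appeals to \Cref{obs:gen_OPT} to transfer the \DisKRA algorithm to \DisOPTKRA by binary search over the target score. Your disjointness argument for distinctness and your completeness claim (every ranking of score at most $k$ reaches some leaf) are also exactly the paper's correctness argument.

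The one genuine problem is your depth bound. You write that ``each branch in which we go \emph{against} the majority on a pair increases the committed cost by at least $1$, so the depth of the search tree is at most $k$,'' and your ``hard part'' paragraph repeats this accounting: only the against-majority child pays, while the with-majority child merely ``reduces the remaining freedom.'' That inference is a non sequitur: if only one of the two children decrements the budget, a root-to-leaf path may contain arbitrarily many free (with-majority) edges, so the depth is bounded only by the number of pairs, $\binom{m}{2}$, and the number of leaves is bounded only by roughly $\sum_{i\le k}\binom{\binom{m}{2}}{i}\le m^{\OO(k)}$ --- an \XP-type bound, not $\OO^*\left(2^k\right)$. The repair is a fact you yourself state in your first paragraph: the branching pair is non-unanimous, so \emph{each} of its two orientations disagrees with at least one voter, and hence \emph{both} children decrement the budget by at least $1$. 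This is precisely how the paper charges the branches: it subtracts $|\{\pi\in\Pi: b\succ_\pi a\}|\ge 1$ on the branch committing $a>b$, and $|\{\pi\in\Pi: a\succ_\pi b\}|\ge 1$ on the other, which yields depth at most $k$ and at most $2^k$ leaves. Alternatively, your preprocessing step already suffices: every non-unanimous pair contributes at least $1$ to your lower bound $L$, so once $L\le k$ is verified there are at most $k$ non-unanimous pairs in total, and these are the only pairs ever branched on. Either fix is available from ingredients in your own write-up, but as stated the crucial counting step fails. (A minor additional point: a pair ``placed in adjacent positions by the majority relation'' is not well defined when the majority relation has cycles; the paper simply branches on an arbitrary undecided pair, which is all that is needed.)
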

\begin{proof}
Due to \Cref{obs:gen_OPT}, it is enough to present an algorithm for \DisKRA.
We design an algorithm for a more general problem \DisKRAprime where every output ranking needs to respect the relative order of some set of pair of candidates given as input. If the set of pairs of candidates is empty, then the new problem is the same as \DisKRA.

Let $(\CC,\Pi,k,r)$ be an arbitrary instance of \DisKRA. We define $\XX=\{a>b: a,b\in \CC,\text{ every ranking in }\Pi\text{ prefers }a\text{ over }b\}$ to be the unanimity order of $\Pi$. We find a solution of \DisKRAprime instance $(\CC,\Pi,k,r,\XX)$. We now design a bounded search based algorithm. We maintain a set \SS of solutions, which is initialized to the empty set. If every pair of candidates belong to \XX and $k\ge0$, then we put the ranking induced by \XX in \SS. If $k<0$, then we discard this branch. Otherwise, we pick a pair $(a,b)$ of candidates not present in \XX, solve $(\CC,\Pi,k-|\{\pi\in\Pi: ~b \succ a \text{ in }\pi \}|,r,\text{transitive closure of }\XX\cup\{a>b\})$ and $(\CC,\Pi,k-|\{\pi\in\Pi: ~a \succ b \text{ in }\pi\}|,r,\text{transitive closure of }\XX\cup\{b>a\})$ recursively, and put solutions found in \SS. We note that, since $(a,b)$ is not a unanimous order of $\Pi$, the target Kemeny score $k$ decreases by at least one on both the branches of the search tree. Hence, the height of the search tree is at most $k$. Thus, the number of leaves and nodes in the search tree are at most respectively $2^k$ and $2\cdot 2^k$. After the search terminates, we output $\min\{r,|\SS|\}$ rankings from \SS. If \SS remains empty set, report that there is no ranking whose Kemeny score is at most $k$. The computation at each node of the search tree (except the recursive calls) clearly takes a polynomial amount of time. Hence, the runtime of our algorithm is $\OO^*\left(2^{k}\right)$. The correctness of our algorithm follows from the observation that every ranking $R$ whose Kemeny score is at most $k$, appears in a leaf node of the search tree of our algorithm. This also follows from Section 4.2 of \cite{DBLP:journals/tcs/BetzlerFGNR09}.
\end{proof}

Running the algorithm in \Cref{fpt:k} with target Kemeny score $\lambda k$ where $k$ is the optimal Kemeny score gives us the following result.

\begin{corollary}\label{cor:fpt-k}
There is an algorithm for \DisApproxKRA running in time $\OO^*\left(2^{\lambda k}\right)$ parameterized by both $\lambda \text{ and }k$.
\end{corollary}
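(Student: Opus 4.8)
The plan is to obtain \Cref{cor:fpt-k} as a direct consequence of \Cref{fpt:k} by instantiating the target Kemeny score appropriately. Recall that \DisApproxKRA asks for $\el=\min\{r,|K(\Pi,\lambda\cdot k_{\text{OPT}}(\Pi))|\}$ distinct rankings, each respecting the unanimity order and each having Kemeny score at most $\lambda\cdot k_{\text{OPT}}(\Pi)$. This is exactly the task solved by \DisKRA with the target Kemeny score parameter set equal to $\lambda k$, where $k=k_{\text{OPT}}(\Pi)$ is the optimal Kemeny score. So the first step is simply to observe that a \DisApproxKRA instance $(\CC,\Pi,\lambda,r)$ is, modulo knowing $k_{\text{OPT}}(\Pi)$, a \DisKRA instance $(\CC,\Pi,\lfloor\lambda k\rfloor,r)$.

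The second step is to compute $k_{\text{OPT}}(\Pi)$, since the reduction needs the numerical target $\lambda k$ and $\lambda k$ is not given directly in the input. One way to do this is to invoke \Cref{fpt:k} (or \Cref{obs:gen_OPT}) to find one optimal Kemeny ranking and read off its score; this runs in time $\OO^*(2^{k})$, which is dominated by the main call. Having $k$ in hand, I would then run the search-tree algorithm of \Cref{fpt:k} on the instance $(\CC,\Pi,\lfloor\lambda k\rfloor,r)$, with the unanimity order of $\Pi$ as the initial constraint set $\XX$. By the correctness of that algorithm, every ranking whose Kemeny score is at most $\lfloor\lambda k\rfloor$ appears at some leaf of the search tree and respects $\XX$, hence respects the unanimity order; so the output set is precisely $K(\Pi,\lambda k)$ (restricted to unanimity-respecting rankings), from which we return $\min\{r,|K(\Pi,\lambda k)|\}$ rankings, as required.

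For the running time, the search tree of \Cref{fpt:k} run with target $t$ has height at most $t$ and therefore $\OO^*(2^{t})$ nodes; substituting $t=\lfloor\lambda k\rfloor$ gives $\OO^*(2^{\lambda k})$, and the one-time computation of $k_{\text{OPT}}(\Pi)$ in time $\OO^*(2^{k})$ is absorbed into this bound since $\lambda\ge 1$. This yields the claimed $\OO^*(2^{\lambda k})$ running time parameterized by both $\lambda$ and $k$.

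The proof is essentially a parameter-substitution argument, so there is no deep obstacle; the only point requiring a little care is verifying that the notion of a \emph{valid} solution matches between the two problems. Specifically, I should confirm that the rankings enumerated by \Cref{fpt:k} with the unanimity order as the initial $\XX$ are exactly the unanimity-respecting rankings of Kemeny score at most $\lambda k$, so that the approximation problem's extra requirement (each output ranking respects the unanimity order) is automatically met. This follows because the algorithm never violates $\XX$ along any branch, and $\XX$ is initialized to the unanimity order; thus the output constraint is built into the search and no additional filtering is needed.
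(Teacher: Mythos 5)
Your proposal is correct and matches the paper's approach exactly: the paper's proof is the one-line observation that running the algorithm of \Cref{fpt:k} with target Kemeny score $\lambda k$ (where $k$ is the optimal score) solves \DisApproxKRA. Your write-up merely fills in details the paper leaves implicit --- first computing $k_{\text{OPT}}(\Pi)$ via \Cref{obs:gen_OPT}/\Cref{fpt:k} in time $\OO^*(2^{k})$, and noting that the unanimity-order initialization of $\XX$ guarantees the approximation problem's extra constraint --- both of which are sound and absorbed by the claimed bound.
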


We now consider the number of candidates $m$ as our parameter and present a dynamic programming based \FPT algorithm for \DisKRA.

\begin{theorem}\label{fpt:m}
There is an algorithm for \DisKRA which runs in time $\OO^*\left(2^m r^{\mathcal{O}(1)}\right)$. In particular, \DisKRA and \DisOPTKRA are \FPT parameterized by the number of candidates since the number $r$ of output rankings can be at most $m!$.
\end{theorem}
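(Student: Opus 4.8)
The plan is to give a dynamic programming algorithm over subsets of candidates, exploiting the standard fact that for Kemeny-type problems the cost of placing candidates decomposes nicely once we fix which candidates occupy the ``top'' portion of the ranking. Concretely, for each subset $S\subseteq\CC$ I would record the minimum Kemeny score achievable by a complete order on $S$ that is treated as occupying positions $1,\ldots,|S|$, where we only pay for pairs with both endpoints inside $S$. The key observation is that once we decide that some candidate $c$ occupies the bottom-most position among the candidates of $S$ (i.e., position $|S|$), the pairwise disagreement cost between $c$ and every other candidate of $S$ is fixed and independent of the internal arrangement of $S\setminus\{c\}$. This cost is precisely $\sum_{a\in S\setminus\{c\}} N_R(a \succ c)$, using the pairwise-majority counts $N_R(\cdot)$ defined in the preliminaries, since placing $c$ last means $c$ is ranked below every other element of $S$, and each voter preferring $a$ over $c$ contributes nothing while each voter preferring $c$ over $a$ contributes one disagreement. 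This yields the recurrence
\[
\mathrm{Opt}(S) = \min_{c\in S}\Big(\mathrm{Opt}(S\setminus\{c\}) + \sum_{a\in S\setminus\{c\}} N_R(c \succ a)\Big),
\]
with $\mathrm{Opt}(\emptyset)=0$, and $\mathrm{Opt}(\CC)$ equals the optimal Kemeny score.

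To handle the distinct-enumeration requirement, I would not merely store a single optimal value per subset but rather keep, for each subset $S$, a bounded list of up to $r$ distinct best completions together with their scores. When computing $\mathrm{Opt}(S)$ I would merge the candidate-specific lists obtained from each choice of last element $c$: for each $c$ and each stored completion of $S\setminus\{c\}$, I append $c$ at the end and add the corresponding fixed cost, then retain the $r$ smallest-score distinct rankings overall. Because each subset's table is capped at $r$ entries and the merge touches at most $m\cdot r$ candidate rankings, the per-subset work is $r^{\OO(1)}$ up to polynomial factors, giving the claimed $\OO^*(2^m r^{\OO(1)})$ bound. The distinctness is automatic since prepending distinct prefixes (or the same prefix with distinct final candidates) yields distinct complete orders.

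The main obstacle I anticipate is arguing correctness of the truncation to $r$ entries, namely that keeping only the top $r$ completions per subset never causes us to miss any ranking needed to assemble the global top $r$ Kemeny rankings of $\CC$. The justification is a standard exchange/optimality argument: any one of the $r$ best complete orders on $\CC$ restricts to a completion of each prefix subset whose score (on the internal pairs) is among the best for that subset, so discarding everything beyond rank $r$ in each table is safe because a better internal completion of a prefix could only improve or tie the global ranking. I would formalize this by showing that if a globally-retained ranking used a prefix completion ranked worse than $r$-th for its subset, we could substitute a better one without worsening the global score and without colliding with already-counted rankings. Finally, since any instance has at most $m!$ distinct rankings, $\min\{r,|K(\Pi)|\}$ is well-defined and the output size is controlled, establishing that both \DisKRA and \DisOPTKRA are \FPT in $m$; the \DisOPTKRA claim follows via \Cref{obs:gen_OPT}.
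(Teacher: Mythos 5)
Your proposal is correct and takes essentially the same route as the paper: a Held--Karp style dynamic program over subsets $S\subseteq\CC$ that stores up to $r$ best \emph{distinct} orderings of $S$ per table entry (the paper extends rankings by placing the new candidate \emph{first} rather than last, which is symmetric), with the same exchange/substitution argument justifying that truncating each table to $r$ entries never loses a ranking needed for the global top $r$. One minor slip worth fixing: your inline cost formula $\sum_{a\in S\setminus\{c\}} N_R(a \succ c)$ contradicts your (correct) verbal description and your displayed recurrence, which use $\sum_{a\in S\setminus\{c\}} N_R(c \succ a)$.
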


\begin{proof}
Let $(\CC,\Pi,k,r)$ be an arbitrary instance of \DisKRA. We maintain a dynamic programming table \TT indexed by the set of all possible non-empty subsets of \CC. For a subset $\SS\subseteq \CC, \SS\neq\emptyset$, the table entry $\TT[\SS]$ stores at most $\min\{r,|\SS|!\}$ distinct rankings on \SS which have the least Kemeny score when the votes are restricted to \SS. Let us define $\kappa= \min\{r,|\SS|!\}$. We initialize table \TT for the trivial cases like $\TT[\SS] = \left(  \right) \text{ when } |\SS| = 0, ~\TT[\SS] = \left( \text{the element from } \SS \right) \text{ when } |\SS| = 1 \text{ and } \TT[\SS] = \left( x \succ y \right) \text{ when } \SS = \left\{ x, y \right\}$ and $ x \succ y $ has the least Kemeny score when $\Pi$ is restricted to $\left\{ x, y \right\}$ or $ \TT[\SS] = \left( x \succ y, ~y \succ x \right) \text{ when } \SS = \left\{ x, y \right\}$ and both $ x \succ y  $ and $y \succ x$ have the least Kemeny score when $\Pi$ is restricted to $\left\{ x, y \right\}$.
To update the table entry $\TT[\SS]$ for $|\SS| \ge 3$, we include to that entry $\min\{r, |\SS|!\}$ rankings that have the least Kemeny score (when the votes are restricted to $\SS$) among all rankings of the form $c>\pi$, where $c$ is a candidate in $\SS$ and $\pi$ is a ranking stored in $\TT[\SS\setminus\{c\}]$. Updating each table entry takes at most $\mathcal{O}^{\star}(r^{\mathcal{O}(1)})$ time. As there are $2^{m}-1$ table entries, the running time of our algorithm is at most $\mathcal{O}^{\star}\big(2^m r^{\mathcal{O}(1)}\big)$.

We now present the proof of correctness of our algorithm. Suppose we have $\SS=\{c_1,...,c_{\el}\}$ and $c_1>...>c_{\el}$ be a ranking in $\TT[\SS]$. Then $c_1>...>c_{\el}$ is a Kemeny ranking if the votes in $\Pi$ are restricted to $\SS$. But then $c_2>...>c_{\el}$ is a Kemeny ranking if votes are restricted to $\SS \setminus \{c_1\}$. If not, then suppose $c_2^{\prime}>...>c_{\el}^{\prime}$ be a ranking with Kemeny score less than $c_2>...>c_{\el}$. Then the Kemeny score of $c_1>c_2^{\prime}>...>c_{\el}^{\prime}$ is less than the Kemeny score of $c_1>c_2>...>c_{\el}$ contradicting our assumption that $c_1>...>c_{\el}$ is a Kemeny ranking when votes are restricted to \SS. Hence, the update procedure of our dynamic programming algorithm is correct.
\end{proof}


\Cref{corollary:Dis_Approx_fpt_m} follows immediately from the algorithm presented in the proof of \Cref{fpt:m}.

\begin{corollary}
\DisApproxKRA is FPT parameterized by the number of candidates $m$.
\label{corollary:Dis_Approx_fpt_m}
\end{corollary}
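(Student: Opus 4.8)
The plan is to reuse the subset dynamic program of \Cref{fpt:m} essentially verbatim, after a single preprocessing step that fixes the error threshold. First I would run that dynamic program to compute the optimal Kemeny score $k_{\text{OPT}}(\Pi)$: this is simply the least Kemeny score attained by an entry of $\TT[\CC]$, and it can be read off in time $\OO^*(2^m)$ (for this step we do not even need the full lists of rankings). Having $k_{\text{OPT}}(\Pi)$ in hand, I would set the target score to $k=\lfloor \lambda\cdot k_{\text{OPT}}(\Pi)\rfloor$ and invoke the \DisKRA routine of \Cref{fpt:m} on the instance $(\CC,\Pi,k,r)$. Since Kemeny scores are integers we have $K(\Pi,k)=K(\Pi,\lambda\cdot k_{\text{OPT}}(\Pi))$, so the output is $\min\{r,|K(\Pi,\lambda\cdot k_{\text{OPT}}(\Pi))|\}$ distinct unanimity-respecting rankings of Kemeny score at most $\lambda\cdot k_{\text{OPT}}(\Pi)$, which is exactly a solution to the \DisApproxKRA instance $(\CC,\Pi,\lambda,r)$.

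The one point that needs care is that the table in \Cref{fpt:m} keeps least-score rankings, whereas here I must retain near-optimal ones (all rankings up to the threshold, not only the minimum). I would therefore have $\TT[\SS]$ store the $\kappa=\min\{r,|\SS|!\}$ unanimity-respecting rankings of \emph{smallest} Kemeny score on $\SS$, kept in sorted order. The prepending recurrence still applies because the cost of placing a candidate $c$ first in $\SS$ is a constant depending only on $c$ and $\SS$ (the number of voter-preferences that rank some candidate of $\SS\setminus\{c\}$ above $c$), independent of how the remaining candidates are ordered. Hence the score of $c>\pi$ equals this constant plus the score of $\pi$, so ordering by score is preserved under prepending, and the $\kappa$ smallest-score rankings of $\SS$ that begin with $c$ are obtained by prepending $c$ to the $\kappa$ smallest-score rankings stored in $\TT[\SS\setminus\{c\}]$. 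Taking, over all admissible first candidates $c$, the union of these lists and truncating to the $\kappa$ globally smallest gives $\TT[\SS]$; restricting $c$ to candidates not ranked below any member of $\SS\setminus\{c\}$ in the unanimity order keeps every stored ranking unanimity-respecting. This is the same optimal-substructure argument as in \Cref{fpt:m}, now applied to the top-$\kappa$ list rather than to a single best value.

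Finally I would record the running time and the parameterized claim. Each of the $2^m-1$ table entries is filled by merging $\OO(m)$ sorted lists of length at most $r$ and truncating, at cost $\OO^*(r^{\OO(1)})$, so the whole computation runs in $\OO^*(2^m r^{\OO(1)})$, matching \Cref{fpt:m}. Since there are at most $m!$ distinct rankings in all, the number of outputs is effectively capped at $m!$, a function of $m$ alone; substituting $r\le m!$ gives a running time of the form $f(m)\cdot\mathrm{poly}(|\text{input}|)$, so \DisApproxKRA is \FPT parameterized by $m$. The main obstacle is precisely the verification in the second paragraph: confirming that enlarging the stored lists from ``optimal only'' to ``top-$\kappa$ by score'' does not break the substructure exploited by \Cref{fpt:m}. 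Once the additive, order-preserving nature of the prepending cost is observed, the remainder is routine bookkeeping.
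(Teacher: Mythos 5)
Your proposal is correct and takes essentially the same route as the paper: both reduce \DisApproxKRA to the subset dynamic program of \Cref{fpt:m} in two phases, first determining the optimal Kemeny score $k_{\text{OPT}}$ and then re-invoking the \DisKRA routine with target $\lambda\cdot k_{\text{OPT}}$, concluding with the bound $r\le m!$ to get an \FPT running time in $m$. The only differences are cosmetic --- the paper finds $k_{\text{OPT}}$ by a linear search over instances $(\CC,\Pi,0,1),(\CC,\Pi,1,1),\ldots$ while you read it off one run of the table $\TT[\CC]$, and your explicit check that the top-$\kappa$ lists satisfy the additive prepending substructure makes rigorous a point the paper's update step leaves implicit.
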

\begin{proof}
Consider an instance $(\CC, \Pi, \lambda, r)$ of \DisApproxKRA. We run the algorithm of Theorem \ref{fpt:m} on instances $(\CC,\Pi,0,1), (\CC,\Pi,1,1)$, $\ldots$ of \DisKRA. We stop once we encounter a YES instance, say $(\CC,\Pi,k^{*},1)$. Note that $k^{*}$ is the optimum Kemeny score for the election profile $(\CC,\Pi)$. Next, we run the algorithm of Theorem \ref{fpt:m} on the instance $(\CC,\Pi,\lambda\cdot k^{*},r)$ of \DisKRA to get the desired output. As $k^{*}\leq \binom{m}{2}\cdot |\Pi|$, the overall running time of the algorithm is at most $\mathcal{O}^{*}\big(2^m r^{\mathcal{O}(1)}\big)$. So, as $r\leq m!$, it follows that \DisApproxKRA is FPT parameterized by the number of candidates $m$.
\end{proof}

Our next parameter is the ``average pairwise distance (Kendall-Tau distance)'' $d$ of the input rankings. We present a dynamic programming based \FPT algorithm parameterized by $d$.

\begin{theorem}\label{fpt:d} Let $d$ be the average KT-distance of an election $\left( \Pi, \CC \right)$. There is an \FPT for \DisOPTKRA parameterized by $d$ which runs in time $\OO^{\star}\left( 16^{d} \right)$.
\end{theorem}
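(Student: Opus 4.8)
The plan is to solve \DisOPTKRA by a position-based dynamic program whose state space is controlled directly by $d$, rather than by invoking the search-tree algorithm of \Cref{fpt:k}. The reason the search tree is not enough is that the optimal Kemeny score is \emph{not} bounded by any function of $d$ alone: a profile consisting of $n-1$ copies of a ranking together with one reversed copy has average KT distance tending to $0$ while its optimal score is $\binom{m}{2}$. So the leverage cannot come from a small budget; it must come from a structural bound saying that a small average KT distance forces every optimal ranking to lie \emph{close} to the ``average'' order of the candidates. First I would preprocess the instance by computing the unanimity order and restricting attention to rankings that respect it, so that the only freedom left is in reordering non-unanimous (``dirty'') candidates.

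The key step I would isolate is the following structural claim, which I regard as the crux and which is essentially the content of the average-KT-distance analysis of \cite{DBLP:journals/tcs/BetzlerFGNR09}: if we sort the candidates by their average position $\bar p(c)=\tfrac1n\sum_{\pi\in\Pi}\mathrm{pos}_\pi(c)$, then in \emph{every} optimal Kemeny ranking each candidate sits within $O(d)$ positions of its rank in this average order. The intuition is that a pair of candidates that are far apart in the average order must be preferred in that order by a large majority of voters, so reversing such a pair would cost more than the entire optimal budget; hence only candidates whose average positions fall inside a common $O(d)$-window can ever be reordered. I would prove this by charging the displacement of a candidate to the pairwise voter-disagreements it participates in, and then bounding the total of these disagreements by $\binom{n}{2}\,d=\sum_{\pi\neq\pi'}\text{d}_{\text{KT}}(\pi,\pi')$. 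Pinning the constant so that at most $4d$ candidates are simultaneously ``active'' (thereby landing on base $16$ rather than a larger base) is where the accounting must be done carefully.

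Given the window bound, I would run the standard left-to-right dynamic program over the $m$ positions, where a state, taken at the cut after position $i$, records the subset of active candidates: those whose average-order window straddles $i$ but that have not yet been placed. By the structural claim at most $4d$ candidates are active at any cut, so there are at most $2^{4d}$ states per position and $\OO^*(16^d)$ states in total. To obtain \emph{distinct} optimal rankings I would, exactly as in the proof of \Cref{fpt:m}, let each state store a list of up to $r$ pairwise-distinct partial rankings that attain the minimum partial Kemeny score for that state; each transition extends every stored partial ranking by the candidates placed in the current block, adds the incremental score, and then deduplicates and truncates the surviving list to $r$ entries. After the table is filled, the optimal full rankings are read off from the terminal state and we output $\min\{r,|K(\Pi)|\}$ of them.

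The main obstacle is the structural claim itself, both establishing the $O(d)$ window and controlling its constant to reach $2^{4d}$. A secondary but genuine difficulty is bounding the enumeration overhead: a priori there can be many distinct optimal rankings, so I would need an exchange argument (analogous to the optimal-substructure argument behind \Cref{fpt:m}) showing that keeping only $r$ representatives per state never discards a partial ranking that the full table would later complete into an output ranking, and I would fix a canonical tie-break for candidates with equal $\bar p$ so that ties do not inflate the active window. With these in place the running time is $\OO^*(16^d)$, matching the single-solution bound for this parameter up to the polynomial bookkeeping for the stored solutions.
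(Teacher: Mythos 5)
Your proposal follows essentially the same route as the paper's proof: both rest on Lemmas 6 and 7 of Betzler et al.~\cite{DBLP:journals/tcs/BetzlerFGNR09} (every candidate lies within $d$ of its average position in any optimal ranking, hence at most $4d$ candidates are ``active'' at any position), and both extend the resulting position-indexed subset dynamic program of that paper so that each table entry keeps a bounded list of best partial solutions rather than a single score, yielding $\OO^{\star}\left(16^{d}\right)$ time. The only difference is bookkeeping: the paper stores $\min(r,4d)$ partial Kemeny scores per entry and reconstructs the rankings afterwards, whereas you store up to $r$ distinct partial rankings together with an explicit exchange/truncation argument --- the same idea up to implementation details.
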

\begin{proof} Let $|\CC| = m, ~|\Pi| = n \text{ and } p_{avg} \left( c \right) \coloneqq \frac{1}{n} \cdot \sum\limits_{v \in \Pi}v(c)$ where $v(c) \coloneqq |\left\{ c^{\prime} \in \CC : c^{\prime} \succ c \text{ in }v \in \Pi \right\}|$. Formally for an election $\left(\Pi, \CC\right)$, $d \coloneqq \frac{\sum\limits_{v \in \Pi} \sum\limits_{w \in \Pi}\text{d}_{\text{KT}}\left( v, w \right)}{n \cdot (n - 1)}$. Following the proof of both Lemma 6 and Lemma 7 from Betzler et al. \cite{DBLP:journals/tcs/BetzlerFGNR09}, we have a set of candidates say ${P_{i}} \coloneqq \left\{ c \in \CC \mid p_{avg}(c) - d < i < p_{avg}(c) + d \right\}$ for each position $i \in \left[ m - 1 \right]_{0}$ in an optimal Kemeny Consensus and we know that $|P_{i}| \leqslant 4d ~~\forall i\in \left[ m - 1 \right]_{0}$. Our FPT dynamic programming algorithm is an extension of the algorithm presented in Fig. 4. of section 6.4 of \cite{DBLP:journals/tcs/BetzlerFGNR09}. 

Let the subset of candidates that are forgotten at latest at position $i$, be denoted by $F(i) \coloneqq P_{i-1} \setminus P_{i}$ and  the subset of candidates that are introduced for the first time at position $i$ be denoted by $I(i) \coloneqq P_i \setminus P_{i-1}$. We maintain a three dimensional dynamic programming table \TT indexed by $\forall i \in \left[ m - 1\right]_{0}, \forall \text{ }c \in P_{i} \text{ and } \forall P_{i}^{\prime} \subseteq P_{i} \setminus \left\{ c \right\}$ of size at most $\OO\left( 16^{d} \cdot d \cdot m \right)$. We define the partial Kemeny Score $\text{pK-score}(c,\RR) \coloneqq \sum\limits_{c^{\prime} \in \RR}\sum\limits_{v \in \Pi}d_{v}^{\RR}(c, c^{\prime})$ where $ d_{v}^{\RR}(c, c^{\prime}) \coloneqq  0 \text{ if } c \succ_{v} c^{\prime}  \text{ and }d_{v}^{\RR}(c, c^{\prime}) \coloneqq 1 \text{ otherwise}$ and $\RR \subseteq \CC$.
At each table entry $\TT(i, c, P_{i}^{\prime})$, we store a sequence of at most $\min\left(r, 4d \right)$ number of partial Kemeny Scores sorted in non-decreasing order by considering and iterating over the entries in $\TT(i-1, c^{\prime}, (P_{i}^{\prime} \cup F(i)) \setminus \left\{ c^{\prime} \right\} )$ $\forall c^{\prime} \in P_{i}^{\prime} \cup F(i)$ and we store the tuple \[ \biggl( \TT(i-1, c^{\prime}, (P_{i}^{\prime} \cup F(i)) \setminus \left\{ c^{\prime} \right\} ) \] \[ + \text{pK-score}(c, (P_{i} \cup \bigcup\limits_{i<j<m}I(j)) \setminus (P_{i}^{\prime} \cup \left\{ c \right\})) \biggl)_{c^{\prime} \in P_{i}^{\prime} \cup F(i)}\] in that table entry unlike storing only the minimum partial Kemeny Score at each table entry. K-score of an election is the Kemeny Score of an optimal Kemeny ranking. $\text{K-score}(\Pi, \CC) = \sum\limits_{i = 0}^{m-2} \text{pK-score}(c_i, \RR_i)$.

At each entry of the table candidate $c$ takes position $i$ and all of $P_{i}^{\prime}$ take position smaller than $i$. The initialization step is same as the algorithm presented in Fig. 4. of section 6.4 of \cite{DBLP:journals/tcs/BetzlerFGNR09} but the difference lies in the update step of that algorithm. Though we are storing Kemeny score in each table entry, we can enumerate Kemeny ranking(s) from them within asymptotic bound of our current run time by iteratively ordering the candidate(s) for which we get minimum partial Kemeny Score in a particular table entry. We Output first $r$ number of optimal Kemeny rankings whose K-scores are stored in the entry $T(m-1, c, P_{m-1} \setminus \left\{ c \right\})$ where $r \leqslant 4d \leqslant 4m^{2} << m!$. Correctness of Lemma 8 of \cite{DBLP:journals/tcs/BetzlerFGNR09} ensures the correctness of our algorithm for generating at most  $min\left(r, 4d \right)$ number of optimal Kemeny Rankings.

Updating each table entry takes time at most $\min(r, 4d) \cdot (4d + nm \log m)$ time. Hence, the overall runtime is bounded above by $\OO^{\star} \left( 16^{d} \right)$.
\end{proof}

We next consider the ``maximum range" $r_{max}$ of candidate positions in the input rankings, as our parameter. We again present a dynamic programming based \FPT algorithm parameterized by $r_{max}$.

\begin{theorem}\label{fpt:rmax} Let $r_{max}$ be the maximum candidate position range of an election $\left( \Pi, \CC \right)$. There exists an \FPT dynamic programming algorithm for \DisOPTKRA parameterized by $r_{max}$ which runs in time $\mathcal{O}^{*}\left( 32^{r_{max}} \right)$.
\end{theorem}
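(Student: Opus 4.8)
The plan is to reuse the position-indexed dynamic programming scheme of \Cref{fpt:d} essentially verbatim, replacing the windows coming from the average KT-distance $d$ by the windows that Betzler et al.~\cite{DBLP:journals/tcs/BetzlerFGNR09} extract from the maximum range. First I would invoke their structural lemma (Lemmas 6--8 of~\cite{DBLP:journals/tcs/BetzlerFGNR09}): in an optimal Kemeny ranking a candidate $c$ can sit at position $i$ only if $i$ lies inside a window of width $O(r_{max})$ around $p_{avg}(c)$, so the set $P_i$ of candidates eligible for position $i$ has $|P_i| = O(r_{max})$, and their counting is sharp enough that the number of subsets $P_i' \subseteq P_i$ is at most $32^{r_{max}}$. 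As in the previous proof I would set $F(i) = P_{i-1}\setminus P_i$ and $I(i) = P_i\setminus P_{i-1}$ for the candidates forgotten and introduced at position $i$.

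Next I would build the same three-dimensional table $\TT(i,c,P_i')$ indexed by a position $i$, a candidate $c\in P_i$ placed at position $i$, and a subset $P_i'\subseteq P_i\setminus\{c\}$ of the candidates already placed before $i$. The only deviation from the single-solution algorithm of~\cite{DBLP:journals/tcs/BetzlerFGNR09} is precisely the one introduced in \Cref{fpt:d}: at each entry I store a non-decreasing sorted list of the $\min\{r,O(r_{max})\}$ smallest partial Kemeny scores, each obtained by adding $\text{pK-score}(c,\cdot)$ to an entry $\TT(i-1,c',(P_i'\cup F(i))\setminus\{c'\})$ ranging over all $c'\in P_i'\cup F(i)$, rather than keeping only the minimum. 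Complete rankings are then recovered by back-tracking through the table, always following a predecessor that realises a stored score, exactly as in \Cref{fpt:d}.

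For the running time I would note there are $O(32^{r_{max}}\cdot r_{max}\cdot m)$ table entries, and updating one costs $\min\{r,O(r_{max})\}\cdot\big(O(r_{max}) + nm\log m\big)$, so the total is $\OO^*\left(32^{r_{max}}\right)$ with all polynomial factors absorbed by $\OO^*$. Correctness I would split into two independent parts: the correctness of the underlying position DP is inherited directly from Lemma 8 of~\cite{DBLP:journals/tcs/BetzlerFGNR09}, and the claim that retaining the $\min\{r,O(r_{max})\}$ smallest scores at every entry suffices to enumerate $\min\{r,|K(\Pi)|\}$ distinct optimal rankings is the same prefix/exchange argument used in \Cref{fpt:d} --- any globally optimal ranking restricts to a prefix that must itself rank among the best partial solutions stored at the corresponding entry, so no optimal ranking can be missed.

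The main obstacle I anticipate is not the dynamic program but pinning down the window bound that yields the base $32$: I would need the exact quantitative form of Betzler et al.'s range lemma guaranteeing $2^{|P_i|}\le 32^{r_{max}}$, and I must argue that the ``keep the top $\min\{r,\cdot\}$ scores'' rule neither silently drops a ranking nor produces duplicates when partial solutions are concatenated across consecutive positions. Verifying distinctness is the delicate point, since two different stored score \emph{values} must be certified to yield two genuinely different linear orders after back-tracking; ensuring that the reconstruction step returns distinct complete rankings rather than distinct score-witnesses is the step I would check most carefully, and it is also where the precise accounting of how many optimal rankings a bounded-range profile can admit (and hence whether a given $r$ is achievable) enters.
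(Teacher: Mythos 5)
Your proposal takes essentially the same route as the paper: the paper's proof likewise invokes the range-based window lemmas of Betzler et al.\ (their Lemmas 9 and 10, yielding $|P_i| = O(r_{max})$), sets up the identical table indexed by $(i,c,P_i')$, and then simply declares the result to follow ``by complete analogy'' to the proof of \Cref{fpt:d}, with the same top-$\min\{r,\cdot\}$ storage rule and runtime accounting that you spell out. One small correction: the windows for maximum range come from Lemmas 9--10 of~\cite{DBLP:journals/tcs/BetzlerFGNR09} (anchored at a candidate's extreme positions across the votes), not Lemmas 6--8, which are the average-KT-distance statements anchored at $p_{avg}(c)$ that \Cref{fpt:d} uses.
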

\begin{proof}
Following the proof of both Lemma 9 and Lemma 10 from \cite{DBLP:journals/tcs/BetzlerFGNR09}, we have here $|P_{i}| \leqslant 6r_{max}$. We maintain a dynamic programming table $\TT$ of size $\OO \left( 32^{r_{max}} \cdot r_{max} \cdot m \right)$ indexed by $\forall i \in \left[ m - 1\right]_{0}, \forall \text{ }c \in P_{i} \text{ and } \forall P_{i}^{\prime} \subseteq P_{i} \setminus \left\{ c \right\}$. The proof of Theorem \ref{fpt:rmax} follows immediately from a complete analogy to the proof of Theorem \ref{fpt:d}.
\end{proof} 

Our final parameter is the unanimity width of the input rankings. We present a dynamic programming based \FPT algorithm.

\begin{theorem}\label{unanimitywidthalgo}
$\textsc{Distinct OPT Kemeny Rank Aggregation}$ admits an \FPT algorithm in the combined parameter unanimity width $w$ and number of rankings $r$, which runs in time $\mathcal{O}^{*}\big(2^{\mathcal{O}(w)}\cdot r\big)$.
\end{theorem}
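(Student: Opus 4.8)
The plan is to reduce the problem to enumerating minimum-cost linear extensions of the unanimity order and to run a dynamic program along a path decomposition of $\overline{G_P}$. First I would record the structural facts that make the parameter $w$ useful. Every optimal Kemeny ranking respects the unanimity order (if a candidate $a$ is unanimously preferred to $b$, then $a$ lies above $b$ in every optimal ranking, as used by Arrighi et al.~\cite{DBLP:conf/ijcai/ArrighiFLO021}), so it suffices to search over linear extensions of the unanimity partial order. Writing the Kemeny score as $\sum_{a >_Q b} N_R(b > a)$, every comparable pair contributes a fixed amount (zero for the unanimously preferred orientation), so the only quantity to be optimized is the total cost of orienting the incomparable pairs, which are exactly the edges of $\overline{G_P}$. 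Thus a Kemeny ranking is a linear extension that orients each edge of $\overline{G_P}$ so as to minimize $\sum_{\{a,b\} \in E(\overline{G_P})} (\text{cost of chosen orientation})$, subject to acyclicity and the fixed comparable pairs.

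Next I would compute a path decomposition $(B_1,\dots,B_t)$ of $\overline{G_P}$ of width $\mathcal{O}(w)$ in time $2^{\mathcal{O}(w)}\cdot n^{\mathcal{O}(1)}$, make it nice, and build the ranking top-down while sweeping the decomposition left to right. The crucial structural observation is that if a candidate is forgotten before another is introduced, then the two never share a bag, hence are non-adjacent in $\overline{G_P}$, hence comparable, so their relative order is already pinned by unanimity; consequently only the orientation of edges (pairs coexisting in a bag) is ever decided, and the order of a forgotten candidate against everything still to come is fixed. I would maintain the invariant that a candidate is placed in the prefix before it is forgotten. The DP state is then a pair consisting of the current node and a subset $A \subseteq B_j$ of bag candidates already placed in the prefix; the placed set equals the fixed forgotten set of that node together with $A$, so the state determines the order ideal. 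A candidate may be placed next only if all of its in-bag predecessors lie in $A$ (forgotten predecessors are placed by the invariant). I would charge each edge $\{a,b\}$ exactly once, at the forget node of whichever endpoint is forgotten first: at that moment the other endpoint is still in the bag, the subset $A$ tells us whether it sits above or below, and we add $N_R(\cdot)$ accordingly. Hence $A$ carries all the information needed to evaluate costs, giving $t\cdot 2^{\mathcal{O}(w)}$ states.

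To output distinct optimal rankings I would store at each state, in addition to its optimal partial cost, a list of up to $r$ distinct optimal partial rankings (prefixes) realizing that cost, pruning any surplus beyond $r$. Since the decomposition is a path, each node has a single child, and a state is reached only by the transitions ``introduce'', ``place some eligible candidate'', or ``forget''; I would extend the child's stored prefixes under all optimal transitions, deduplicate, and keep the $r$ cheapest distinct ones. A standard forward pass (minimum cost to reach each state) together with a backward pass (minimum cost from each state to completion) identifies the optimal states and transitions, so that every stored prefix extends to a global optimum. At the terminal node every candidate is placed and the stored prefixes are exactly full rankings, from which I output $\min\{r,|K(\Pi)|\}$ of them. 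The running time is $t\cdot 2^{\mathcal{O}(w)}$ states times $\mathcal{O}(r)$ work per state for the list bookkeeping, i.e. $\mathcal{O}^{*}\big(2^{\mathcal{O}(w)}\cdot r\big)$.

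The main obstacle I anticipate is proving that pruning each list to $r$ entries loses no distinct optimal full ranking. The key lemma to establish is that distinct stored prefixes extend to distinct full rankings (they already differ on the placed candidates, and that difference survives in any completion) and that, restricted to optimal states and transitions, each stored prefix does extend to a global optimum; together these yield that whenever a cell holds $r$ distinct optimal prefixes one already obtains $r$ distinct optimal full rankings, so never storing more than $r$ is both sound and complete. A secondary technical point is the correctness of the once-per-edge charging at forget nodes combined with using only the subset $A$ (rather than a full ordering of the bag) as the state, which is exactly what keeps the state count at $2^{\mathcal{O}(w)}$ rather than $2^{\mathcal{O}(w\log w)}$; verifying this amounts to checking that the DP can realize every linear extension and assigns each its correct score.
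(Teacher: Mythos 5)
Your overall architecture is the same as the paper's: restrict attention to linear extensions of the unanimity order $\rho$, sweep a path decomposition of $\overline{G_P}$ of width $\mathcal{O}(w)$, use states $(\text{bag}, A\subseteq B_j)$ where the placed prefix is $forg(j)\uplus A$, and store at each state up to $r$ distinct optimal prefixes, with the same lossless-pruning argument (distinct prefixes of the same placed set extend to distinct optimal completions). However, two of your steps fail as stated, and both are exactly the points where the paper has to do real work.

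First, you cannot use an \emph{arbitrary} (nice) path decomposition of $\overline{G_P}$. Your invariant ``placed before forgotten'' is only consistent with the unanimity order if, whenever a candidate $u$ is forgotten before a candidate $x$ is introduced, we have $(u,x)\in\rho$. Your structural observation only gives that $u$ and $x$ are \emph{comparable}; if the comparability is pinned the other way, i.e.\ $(x,u)\in\rho$, then every linear extension puts $x$ before $u$, yet your DP must place $u$ before forgetting it while $x$ is not yet introduced --- so no linear extension of $\rho$ is realizable along the decomposition and the DP returns garbage. This is why the paper invokes Lemma 3 of Arrighi et al.\ to build a nice \emph{$\rho$-consistent} path decomposition of width $w'\leq 5w+4$ (in time $2^{\mathcal{O}(w)}\cdot m$), and why its introduce-node analysis explicitly uses $\rho$-consistency to conclude $cost(u,x)=0$ for all forgotten $u$. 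Obtaining a $\rho$-consistent decomposition of width $\mathcal{O}(w)$ is a genuine ingredient your proposal is missing.

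Second, your cost accounting is not well defined. You charge an edge $\{a,b\}$ at the forget node of $a$, claiming ``the subset $A$ tells us whether $b$ sits above or below.'' That is true only when $b\notin A$ (then $b$ is unplaced, hence below $a$). When $b\in A$, both candidates already sit in the prefix, and $A$ records \emph{which} bag candidates are placed, not the order in which they were placed; the relative order of $a$ and $b$ is simply not a function of the state, so the forget transition has no well-defined cost and the Bellman recurrence for the forward pass breaks --- exactly the verification you deferred as a ``secondary technical point.'' The repair is the paper's charging scheme: charge each pair when its \emph{later} element is placed, i.e.\ placing $c$ from state $(j,A)$ costs $\sum_{u\in forg(j)\cup A} cost(u,c)$ (the paper's $\mathcal{A}(i,c)+\mathcal{B}(i,c,A)$ tables), which is state-determined and charges every pair exactly once. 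With these two corrections --- $\rho$-consistent decomposition and placement-time charging --- your argument becomes essentially the paper's proof.
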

\begin{proof}
The problem of finding a Kemeny consensus is known to admit an FPT algorithm in the parameter $w$ (Section 3, \cite{arrighi2020width}). We adapt this algorithm to prove~\Cref{unanimitywidthalgo}. Consider an instance $(\CC,\Pi,r)$ of \DisOPTKRA. Let $m$ denote the number of candidates in $\CC$, and let $n$ denote the number of voters in $\Pi$. For any candidates $a,b\in \CC$, let  $cost(a,b)$ denote the number of voters in $\Pi$ who prefer $b$ over $a$. Note that for any linear ordering $\pi$ of candidates, $\text{Kemeny}_{\Pi}(\pi) = \sum_{a,b \in \CC: a \succ b \text{ in }\pi}cost(a, b)$. 
Let $\rho$ denote the unanimity order of $\Pi$. Let $G_{\rho}$ denote the cocomparability graph of $\rho$. Using Lemma 3 of \cite{DBLP:conf/ijcai/ArrighiFLO021}, let's construct a nice $\rho$-consistent path decomposition, say $\mathcal{P}=(B_1,\ldots,B_{2m}),$ of $G_{\rho}$ of width $w'\leq 5w+4$ in time $\mathcal{O}\big(2^{\mathcal{O}(w)}\cdot m\big)$.\\
$~$\\For each $1\leq i\leq 2m$, 
\begin{itemize}
\item Let $forg(i)$ denote the set of candidates that have been forgotten up to $i^{th}$ bag. That is, $forg(i) = \big(B_1\cup\ldots\cup B_{i-1}\big)\setminus B_i$.
\item For each candidate $v\in B_i$, let $\mathcal{A}(i,v)$ denote the cost incurred by the virtue of placing all candidates 
of $forg(i)$ before $v$. That is, $\mathcal{A}(i,v)=\underset{u\in forg(i)} {\sum}cost(u,v)$.
\item For each candidate $v\in B_i$ and each $T\subseteq B_i\setminus \{v\}$, let $\mathcal{B}(i,v, T)$ denote the cost incurred by the virtue of placing all candidates of $T$ before $v$. That is, $\mathcal{B}(i,v,T) = \underset{u\in T}{\sum}cost(u,v)$.
\item For each $T\subseteq B_i$, let $C(i,T)$ be a set that consists of first $min\big(r,|forg(i)\uplus T|!\big)$ orderings, along with their Kemeny scores, if all linear extensions of $\rho$ on $forg(i) \uplus T$ were to be sorted in ascending order of their Kemeny scores. That is, $C(i,T)$ consists of the tuples $(\pi_1,k_1), (\pi_2,k_2),\ldots$, where $\pi_1,\pi_2,\ldots$ are the first $min\big(r,|forg(i)\uplus T|!\big)$ orderings in the sorted order, and $k_1,k_2,\ldots$ are their respective Kemeny scores. 
\end{itemize}
Recall that every Kemeny consensus extends $\rho$ (Lemma 1, \cite{DBLP:journals/tcs/BetzlerFGNR09}). So, if all linear extensions of $\rho$ on $\CC$ were to be sorted in ascending order of their Kemeny scores, then all Kemeny consensuses would appear in the beginning. Thus, $(\CC,\Pi,r)$ is a YES instance if and only if $C(2m,\phi)$ contains $r$ orderings of the same Kemeny score.\\
$~$\\
Let's use DP to find all $\mathcal{A}(\cdot,\cdot)$'s, $\mathcal{B}(\cdot,\cdot,\cdot)$'s and $C(\cdot,\cdot)$'s as follows:
\begin{itemize}
\item First, let's compute and store $\mathcal{A}(i,\cdot)$'s in a table for $i=1,\ldots,2m$ (in that order) in time $\mathcal{O}\big(w'\cdot m\cdot \log(m\cdot n)\big)$ as follows: We set $\mathcal{A}(1,u)=0$, where $u$ denotes the candidate introduced by $B_1$. Now, consider $i\geq 2$ and a candidate $v\in B_i$. Let's describe how to find $\mathcal{A}(i,v)$.\\
$~$\\
\textbf{Introduce node}.\\ Suppose that $B_i$ introduces a candidate, say $x$. Note that $forg(i) = forg(i-1)$. So, if $v\neq x$, we set $\mathcal{A}(i,v) = \mathcal{A}(i-1,v)$. Now, suppose that $v=x$. Let's show that $cost(u,x)=0$ for all $u\in forg(i)$. Consider   $u\in forg(i)$. In $\mathcal{P}$, $u$ is forgotten before $x$ is introduced. So, $\{u,x\}\not\in E(G_{\rho})$. That is, $u$ and $x$ are comparable in $\rho$. Also, due to $\rho$-consistency of $\mathcal{P}$, we have $(x,u)\not\in\rho$. Therefore, $(u,x)\in \rho$. That is, all voters in $\Pi$ prefer $u$ over $x$. So, $cost(u,x)=0$. Thus, we set $\mathcal{A}(i,x)=0$.\\
$~$\\
\textbf{Forget node}.\\ Suppose that $B_i$ forgets a candidate, say $x$. Note that $forg(i) = forg(i-1) \uplus \{x\}$. So, we set $\mathcal{A}(i,v) = \mathcal{A}(i-1,v) + cost(x,v)$.\\
\item Next, let's compute and store all $\mathcal{B}(\cdot,\cdot,\cdot)$'s in a table in time $\mathcal{O}\big(w'\cdot 2^{w'}\cdot m\cdot \log(m\cdot n)\big)$ as follows: Consider $1\leq i\leq 2m$ and $v\in B_i$. We have $\mathcal{B}(i,v,\phi)=0$. Let's set $\mathcal{B}(i,v,T)$ for non-empty subsets $T\subseteq B_i\setminus\{v\}$ (in ascending order of their sizes) as $\mathcal{B}(i,v,T\setminus\{u\}) + cost(u,v)$, where $u$ denotes an arbitrary candidate in $T$.\\
\item Next, let's compute and store $C(i,\cdot)$'s in a table in time $\mathcal{O}\big(w'\cdot 2^{w'}\cdot m^2\cdot r\cdot \log(m\cdot n\cdot r)\big)$ for $i=1,\ldots, 2m$ (in that order) as follows: We set $C(1,\phi) = \{(,0)\}$ and $C(1,\{u\}) = \{(u,0)\}$, where $u$ denotes the candidate introduced by $B_1$. Now, consider $i\geq 2$. Let's describe how to find $C(i,\cdot)$'s.\\
$~$\\
\textbf{Introduce node}.\\
Suppose that $B_i$ introduces a candidate, say $x$. For each $T\subseteq B_i$ that does not contain $x$, we set $C(i,T) = C(i-1,T)$.\\$~$\\
Now, let's find $C(i,T)$ for all subsets $T\subseteq B_i$ that contain $x$ (in ascending order of their sizes)  as follows: First, let's consider $T=\{x\}$. Recall that $(u,x)\in\rho$ for all $u\in forg(i)$. So, $x$ is the last candidate in all linear extensions of $\rho$ on $forg(i)\uplus \{x\}$. Also, in any such ordering, the pairs of the form $(u,x)$, where $u\in forg(i)$, contribute $0$ to Kemeny score. Thus, we put the tuples $\big(\pi_1>x,s_1\big),\big(\pi_2>x,s_2\big),\ldots$ in $C(i,\{x\})$, where $(\pi_1,s_1),(\pi_2,s_2),\ldots$ denote the tuples of $C(i-1,\phi)$, and $\pi_1>x,\pi_2>x,\ldots$ denote the orderings obtained by appending  $x$ to $\pi_1,\pi_2,\ldots$ respectively.\\
$~$\\
Now, let's consider a subset $T\subseteq B_i$ of size $\geq 2$ that contains $x$. Let's describe how to find $C(i,T)$. Let $\Delta(i,T)$ denote the set of all candidates $c\in T$ such that $c$ is not unanimously preferred over any other candidate of $forg(i)\uplus T$. That is, there is no other candidate $u\in forg(i)\uplus T$ such that $(c,u)\in \rho$. Recall that $x$ appears after all candidates of $forg(i)$ in any linear extension of $\rho$ on $forg(i)\uplus T$. So, it is clear that in any such ordering, the last candidate (say $y$) belongs to $\Delta(i,T)$. Moreover,
\begin{itemize}
\item The pairs of the form $(u,y)$, where $u\in forg(i)$, together contribute $\mathcal{A}(i,y)$ to Kemeny score.
\item The pairs of the form $(u,y)$, where $u\in T\setminus \{y\}$, together contribute $\mathcal{B}(i,y,T\setminus \{y\})$ to Kemeny score.
\end{itemize}
So, to find $C(i,T)$,  let's proceed as follows: We compute $\Delta(i,T)$. For each possible choice $y\in \Delta(i,T)$ of the last candidate, let's form a set, say $\Gamma(y)$, that consists of the following tuples:
\begin{itemize}
\item $\Big(\pi_1^{y}>y, s_1^y+\mathcal{A}(i,y)+\mathcal{B}\big(i,y,T\setminus\{y\}\big)\Big)$
\item $\Big(\pi_2^{y}>y, s_2^y+\mathcal{A}(i,y)+\mathcal{B}\big(i,y,T\setminus\{y\}\big)\Big)$ and so on
\end{itemize}
where $(\pi_1^y, s_1^y), (\pi_2^y, s_2^y),\ldots$ denote the tuples of $C(i, T\setminus\{y\})$, and $\pi_1^y>y, \pi_2^y>y,\ldots$ denote the orderings obtained by appending $y$ to $\pi_1^y,\pi_2^y,\ldots$ respectively. Finally, let's sort all tuples of $\underset{y\in \Delta(i,T)}{\biguplus}\Gamma(y)$ in ascending order of their Kemeny scores, and put the first $min\big(r, |forg(i)\uplus T|!\big)$ of them in $C(i,T)$.
\\$~$\\
\textbf{Forget node}.\\
Suppose that $B_i$ forgets a candidate, say $x$. For each $T\subseteq B_i$, as  $forg(i)\uplus T = forg(i-1)\uplus \big(T\uplus\{x\}\big)$, we set $C(i,T) = C(i-1,T\uplus\{x\})$.
\end{itemize}
This concludes the proof of ~\Cref{unanimitywidthalgo}.
\end{proof}
\begin{corollary}\DisApproxKRA is FPT in the combined parameter unanimity width $w$ and number of rankings $r$. 
\label{corollary:fpt_unanimity_width_r}
\end{corollary}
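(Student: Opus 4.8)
The plan is to mimic the reduction that converts \Cref{fpt:m} (the number-of-candidates algorithm) into \Cref{corollary:Dis_Approx_fpt_m} (its approximate counterpart), applied now to the unanimity-width algorithm of \Cref{unanimitywidthalgo} instead. The structure is: first compute the optimal Kemeny score $k^{*}$, then run the enumeration algorithm but retain every ordering whose score is at most $\lambda k^{*}$ rather than only those achieving the minimum.

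First I would observe that the DP table entries $C(i,T)$ in the proof of \Cref{unanimitywidthalgo} already store, for each bag-restricted subset, the cheapest $\min\{r,|forg(i)\uplus T|!\}$ linear extensions of $\rho$ together with their Kemeny scores. Since \DisApproxKRA requires exactly rankings that respect the unanimity order $\rho$, this DP is the right vehicle: every candidate solution is a linear extension of $\rho$, which is precisely what the table enumerates. To obtain the optimum $k^{*}$, I would read it off from $C(2m,\phi)$, whose first tuple carries the minimum Kemeny score of any extension of $\rho$ — and by Lemma 1 of \cite{DBLP:journals/tcs/BetzlerFGNR09} every Kemeny consensus extends $\rho$, so this is the true optimum. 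Then, having fixed the threshold $\lambda k^{*}$, I would re-run the same DP but change the selection rule at the \textbf{Introduce node} step: instead of keeping only minimum-score extensions, keep the first $\min\{r,\,|forg(i)\uplus T|!\}$ extensions whose scores do not exceed $\lambda k^{*}$ (discarding any tuple whose partial score already forces the eventual total above $\lambda k^{*}$). Because scores only accumulate as we build up orderings, an entry can be pruned as soon as its stored score exceeds $\lambda k^{*}$, so the table width stays bounded by $r$ and the asymptotic running time of $\mathcal{O}^{*}\big(2^{\mathcal{O}(w)}\cdot r\big)$ is preserved.

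The correctness argument parallels \Cref{unanimitywidthalgo}: since $C(2m,\phi)$ enumerates the globally cheapest extensions of $\rho$ in sorted order, truncating at the threshold $\lambda k^{*}$ yields exactly the set $K(\Pi,\lambda\cdot k_{\text{OPT}}(\Pi))$ (intersected with the first $r$ of them), and each such ordering respects $\rho$ by construction. Hence we output $\min\{r,|K(\Pi,\lambda k^{*})|\}$ distinct rankings, each respecting unanimity and each of score at most $\lambda k^{*}$, as the definition of \DisApproxKRA demands.

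The main obstacle I anticipate is bounding the number of near-optimal extensions that the DP must carry through intermediate bags. In the exact version it suffices to keep minimum-score extensions, but for the approximate version a partial ordering with a score well below its own bag-optimum might still extend to a global solution under the $\lambda k^{*}$ slack, so naively one worries about an exponential blow-up in retained tuples. The resolution is that we never need more than $r$ completed rankings overall, so at each entry it is safe to retain only the cheapest $\min\{r,|forg(i)\uplus T|!\}$ candidates: any extension dropped here is dominated by $r$ cheaper extensions that themselves yield valid outputs, so no required output ranking is lost. Making this domination argument precise — that truncating to the cheapest $r$ per cell never discards a ranking needed among the final $r$ outputs — is the one place where I would spend care, but it is the same truncation already justified in \Cref{unanimitywidthalgo}, so the argument carries over essentially verbatim.

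\begin{proof}
We adapt the algorithm of \Cref{unanimitywidthalgo}. First run it with target $r=1$ to read off the optimum Kemeny score $k^{*}=k_{\text{OPT}}(\Pi)$ from the unique tuple in $C(2m,\phi)$; by Lemma 1 of \cite{DBLP:journals/tcs/BetzlerFGNR09} every optimal consensus extends the unanimity order $\rho$, so this is indeed the optimum. Now re-run the same dynamic program, but when forming each entry $C(i,T)$ discard every tuple whose stored Kemeny score exceeds $\lambda k^{*}$, retaining the cheapest $\min\{r,|forg(i)\uplus T|!\}$ of those that survive. Since Kemeny scores only increase as orderings are extended, any discarded tuple cannot complete to an ordering of score at most $\lambda k^{*}$, and keeping only the cheapest $r$ surviving tuples per cell loses no ordering needed among the final $r$ outputs, exactly as argued in \Cref{unanimitywidthalgo}. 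Hence $C(2m,\phi)$ finally holds the first $\min\{r,|K(\Pi,\lambda k^{*})|\}$ extensions of $\rho$ of score at most $\lambda k^{*}$, each of which respects unanimity by construction. This is precisely a solution to \DisApproxKRA. The pruning keeps every table entry of size at most $r$, so the running time remains $\mathcal{O}^{*}\big(2^{\mathcal{O}(w)}\cdot r\big)$.
\end{proof}
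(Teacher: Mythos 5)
Your proof is correct and follows essentially the same route as the paper: both reuse the dynamic programming tables $\mathcal{A}(\cdot,\cdot)$, $\mathcal{B}(\cdot,\cdot,\cdot)$, $C(\cdot,\cdot)$ of \Cref{unanimitywidthalgo} verbatim, together with the two facts that every Kemeny consensus extends the unanimity order $\rho$ and that $C(2m,\phi)$ lists the cheapest linear extensions of $\rho$ in ascending order of Kemeny score. The only (harmless) difference is that the paper does everything in a single pass --- reading the optimum $k_{\text{OPT}}$ off the first tuple of $C(2m,\phi)$ and then outputting the stored orderings whose score is at most $\lambda$ times that value --- so your preliminary run to compute $k^{*}$ and the score-based pruning in a second pass are sound but not actually needed.
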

\begin{proof}
Consider an instance \DisApproxKRA. As in the algorithm described in the proof of \Cref{unanimitywidthalgo}, we find all $\mathcal{A}(\cdot,\cdot)$'s, $\mathcal{B}(\cdot,\cdot,\cdot)$'s and  $C(\cdot, \cdot)$'s. Note that $(\CC,\Pi,\lambda,r)$ is a YES instance if and only if $C(2m,\phi)$ contains $r$ orderings, and the Kemeny score of the $r^{th}$ ordering is at most $\lambda$ times the Kemeny score of the first ordering. The overall running time of the algorithm is at most $\mathcal{O}^{*}\big(2^{\mathcal{O}(w)}\cdot r\big)$. This proves \Cref{corollary:fpt_unanimity_width_r}.
\end{proof}

Our last result is an \FPT algorithm for \DisApproxKRA parameterized by the average Kendall-Tau distance $d$ and the approximation parameter $\lambda$.
Here we aim to relate the position of a candidate $c$ in a $\lambda$-approximate ranking $\pi$, \textit{i.e.} a ranking whose Kemeny Score denoted by $\text{K-score}\left( \pi \right)$ has value at most $\lambda \cdot k_{OPT} \text{ where } k_{OPT}$ denotes the optimal Kemeny Score, to its average position in the set of votes $\Pi$ denoted by $p_{avg}(c)$. Our resulting \Cref{lemma:davg_pavg_lambda_approx} depends on Lemma 6 from \cite{DBLP:journals/tcs/BetzlerFGNR09}.  

\begin{lemma}[$\star$]
\label{lemma:davg_pavg_lambda_approx}
 $p_{avg}(c) - \lambda \cdot d \le \pi(c) \le p_{avg}(c) + \lambda \cdot d$ where $\pi(c)$ denotes position of $c$ in $\pi$ and $d$ is average KT-distance.
\end{lemma}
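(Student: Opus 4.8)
The plan is to reuse the single-candidate ``displacement'' argument underlying Lemma~6 of \cite{DBLP:journals/tcs/BetzlerFGNR09}, but to keep every step valid for an \emph{arbitrary} ranking $\pi$ rather than only an optimal one, and then to specialize at the very end to a $\lambda$-approximate ranking. Fix the candidate $c$ and, for each vote $v\in\Pi$, write $\delta_v(c)$ for the number of candidate pairs incident to $c$ on which $\pi$ and $v$ disagree. First I would compare the set $A$ of candidates that $\pi$ ranks above $c$ (so $|A|=\text{pos}_\pi(c)=\pi(c)$) with the set $B$ of candidates that $v$ ranks above $c$ (so $|B|=v(c)$): every element of the symmetric difference $A\triangle B$ is exactly a $c$-incident disagreement, and since $|A|-|B|=|A\setminus B|-|B\setminus A|$ we get $\delta_v(c)=|A\triangle B|\ge |\text{pos}_\pi(c)-v(c)|$. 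Summing over all votes and applying the triangle inequality to pull the sum inside the absolute value gives
\begin{equation}
\sum_{v\in\Pi}\delta_v(c)\;\ge\;\sum_{v\in\Pi}\big|\text{pos}_\pi(c)-v(c)\big|\;\ge\;n\,\big|\text{pos}_\pi(c)-p_{avg}(c)\big|.
\end{equation}

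For the matching upper bound I would observe that the pairs incident to $c$ form a subset of all candidate pairs, so $\delta_v(c)\le \text{d}_{\text{KT}}(\pi,v)$ for each $v$, and hence $\sum_{v\in\Pi}\delta_v(c)\le \text{Kemeny}_\Pi(\pi)$. This is the one spot where care is needed: one must work with a single fixed $c$ and must \emph{not} sum the displacement bound over all candidates, since that would count each disagreeing pair from both of its endpoints and introduce a spurious factor of two. Combining the two bounds yields the ranking-independent inequality $n\,|\pi(c)-p_{avg}(c)|\le \text{Kemeny}_\Pi(\pi)$, valid for every linear order $\pi$.

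Finally I would specialize. For a $\lambda$-approximate ranking $\pi$ we have $\text{Kemeny}_\Pi(\pi)\le \lambda\cdot k_{OPT}$, so it remains to bound $k_{OPT}$ by $d$. Because the optimal consensus is at least as good as the best input vote, $k_{OPT}\le \min_{w\in\Pi}\sum_{v\in\Pi}\text{d}_{\text{KT}}(w,v)$, and this minimum is at most the average over $w$, namely $\tfrac{1}{n}\sum_{w,v\in\Pi}\text{d}_{\text{KT}}(w,v)=(n-1)d<nd$. Substituting back, $n\,|\pi(c)-p_{avg}(c)|\le \lambda k_{OPT}<\lambda n d$, which gives $|\pi(c)-p_{avg}(c)|<\lambda d$ and therefore the claimed two-sided bound $p_{avg}(c)-\lambda d\le \pi(c)\le p_{avg}(c)+\lambda d$. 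The genuinely delicate parts of the argument are the symmetric-difference displacement inequality (and keeping its constant tight by avoiding double counting) and the estimate $k_{OPT}<nd$; the remaining manipulations are routine bookkeeping.
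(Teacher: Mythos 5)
Your proof is correct and follows essentially the same route as the paper's: a per-vote displacement bound $\text{d}_{\text{KT}}(v,\pi)\ge|\pi(c)-v(c)|$ (which the paper obtains by a two-case counting argument rather than your symmetric-difference phrasing), summed over votes to get $n\,|\pi(c)-p_{avg}(c)|\le \text{K-score}(\pi)$, combined with the bound $k_{OPT}<nd$ obtained from the best input vote. The only differences are cosmetic --- your triangle-inequality step replaces the paper's explicit algebraic manipulations, and ``minimum at most average'' replaces the paper's choice of a distance-minimizing vote $v^{\star}$ --- so there is nothing substantive to reconcile.
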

\begin{proof}
There can be two cases for a vote $v \in \Pi$.
\begin{case}
 $v(c) \le \pi(c)$
 \label{case:v(c)_le_pi(c)}
\end{case}
In \Cref{case:v(c)_le_pi(c)} there are $\pi(c) - 1$ candidates that appear before $c$ in $\pi$. Note that at most $v(c) - 1$ of them can appear before $c$ in $v$. Hence, at least $\pi(c) - v(c)$ of them must appear after $c$ in $v$. Thus, $\text{d}_{\text{KT}}\left( v, \pi \right) \ge \pi(c) - v(c)$.
\begin{case}
 $v(c) > \pi(c)$
 \label{case:v(c)_grt_pi(c)}
\end{case}
Here in \Cref{case:v(c)_grt_pi(c)}, we come up with $\text{d}_{\text{KT}}\left( v, \pi \right) \ge v(c) - \pi(c)$ arguing similarly to \Cref{case:v(c)_le_pi(c)}.
 \begin{align} &\text{K-score}\left( \pi \right) = \sum\limits_{v \in \Pi} \text{d}_{\text{KT}}(v, \pi)\nonumber \\
&= \sum\limits_{v \in \Pi : v(c) \le \pi(c)} \text{d}_{\text{KT}}(v, \pi) + \sum\limits_{v \in \Pi : v(c) > \pi(c)} \text{d}_{\text{KT}}(v, \pi) \nonumber \\
&\ge \sum\limits_{\substack{v \in \Pi : \\ v(c) \le \pi(c)}} \left(\pi(c) - v(c)\right) + \sum\limits_{\substack{v \in \Pi : \\v(c) > \pi(c)}} \left( v(c) - \pi(c) \right)~~~ \left[ \text{using } \Cref{case:v(c)_le_pi(c)} \text{ and }\Cref{case:v(c)_grt_pi(c)} \right] \label{eqn:1_K-score-pi_ge_pi_c-v_c}
\end{align}
Note that
\begin{align}
    &\sum\limits_{v \in \Pi : v(c) \le \pi(c)} \left( \pi(c) - v(c) \right) + \sum\limits_{v \in \Pi : v(c) > \pi(c)} \left( v(c) - \pi(c) \right) \nonumber \\
    &= \sum\limits_{v \in \Pi}v(c) - 2 \sum\limits_{\substack{v \in \Pi : \\ v(c) \le \pi(c)}} v(c) + \pi(c) \cdot \left( 2 \cdot |\left\{ v \in \Pi : v(c) \le \pi(c) \right\}| - n \right) \nonumber \\
    &= n \cdot p_{avg}(c) - n \pi(c) - 2 \sum\limits_{\substack{v \in \Pi : \\ v(c) \le \pi(c)}} v(c) + \pi(c) \cdot \left( 2 \cdot |\left\{ v \in \Pi : v(c) \le \pi(c) \right\}| \right) \nonumber \\
    &\ge n\left( p_{avg}(c) - \pi(c) \right) \label{eqn:2_sum_pi_c-v_c_ge_n_p_avg_c-pi_c}
\end{align}
Similarly,
\begin{align}
    &\sum\limits_{v \in \Pi : v(c) \le \pi(c)} \left( \pi(c) - v(c) \right) + \sum\limits_{v \in \Pi : v(c) > \pi(c)} \left( v(c) - \pi(c) \right) \nonumber \\
    &= - \sum\limits_{v \in \Pi}v(c) + 2 \sum\limits_{\substack{v \in \Pi : \\ v(c) > \pi(c)}} v(c) + \pi(c) \cdot \left( - 2 \cdot |\left\{ v \in \Pi : v(c) > \pi(c) \right\}| + n \right) \nonumber \\
    &= -n \cdot p_{avg}(c) + n \pi(c) + 2 \sum\limits_{\substack{v \in \Pi : \\ v(c) > \pi(c)}} v(c) - \pi(c) \cdot \left( 2 \cdot |\left\{ v \in \Pi : v(c) > \pi(c) \right\}| \right) \nonumber \\
    &\ge -n\left( p_{avg}(c) - \pi(c) \right) \label{eqn:3_sum_pi_c-v_c_ge_-n_p_avg_c-pi_c}
\end{align}
Now let's show that
\begin{equation}
\label{eqn:4_pi_le_lambda_n_d}
\text{K-score}\left( \pi \right) \le \lambda \cdot n \cdot d
\end{equation}
We have
\begin{align}
 d &= \frac{\sum\limits_{v \in \Pi} \sum\limits_{w \in \Pi}\text{d}_{\text{KT}}\left( v, w \right)}{n \cdot (n - 1)} \nonumber \\
 d &\ge \frac{ n \cdot \sum\limits_{w \in \Pi, w \neq v^{\star}} \text{d}_{\text{KT}} \left( v^{\star}, w \right)}{n \cdot \left( n - 1 \right)} > \frac{\sum\limits_{w \in \Pi, w \neq v^{\star}} \text{d}_{\text{KT}} \left( v^{\star}, w \right)}{n} \nonumber \\ &\left[ \exists v^{\star} \in \Pi \text{ for which } \sum\limits_{w \in \Pi, w \neq v^{\star}} \text{d}_{\text{KT}} \left( v^{\star}, w \right) \text{ is minimum } \right] \nonumber \\
 &\implies \text{ K-score}\left( v^{\star} \right) < n \cdot d \nonumber \\
&\text{So, } k_{OPT} \le \text{K-score}\left( v^{\star} \right) <n \cdot d \label{eqn:k_OPT_n_d} \\
& \text{K-score}\left( \pi \right) \le \lambda \cdot k_{OPT} < \lambda \cdot n \cdot d ~\left[ \text{Using \Cref{eqn:k_OPT_n_d}} \right] \label{eqn:proof_of_K_score_pi} \\
&\text{\Cref{eqn:proof_of_K_score_pi} proves \Cref{eqn:4_pi_le_lambda_n_d}} \nonumber \\
&\text{Now } \lambda \cdot n \cdot d \ge \text{K-score}\left( \pi \right) \ge n \cdot \left( p_{avg}(c) - \pi(c) \right) \nonumber \\ &\left[ \text{Using  \Cref{eqn:1_K-score-pi_ge_pi_c-v_c}, (\ref{eqn:2_sum_pi_c-v_c_ge_n_p_avg_c-pi_c}) \&  (\ref{eqn:4_pi_le_lambda_n_d}}) \right] \nonumber \\
&\implies p_{avg}(c) - \lambda \cdot d \le \pi(c) \label{eqn:approx_pi_c_part_1} \\
&\text{Again } \lambda \cdot n \cdot d \ge -n \cdot \left( p_{avg}(c) - \pi(c) \right) \nonumber \\ &\left[ \text{Using \Cref{eqn:1_K-score-pi_ge_pi_c-v_c}, (\ref{eqn:3_sum_pi_c-v_c_ge_-n_p_avg_c-pi_c}) \& (\ref{eqn:4_pi_le_lambda_n_d})} \right] \nonumber \\
&\implies \pi(c) \le p_{avg}(c) + \lambda \cdot d \label{eqn:approx_pi_c_part_2} \\
&\text{Hence } p_{avg}\left( c \right) - \lambda \cdot d \le \pi\left( c \right) \le p_{avg}\left( c \right) + \lambda \cdot d \label{eqn:approx_pi_c_complete} \\ &\left[ \text{Using \Cref{eqn:approx_pi_c_part_1} \& (\ref{eqn:approx_pi_c_part_2})} \right] \nonumber 
\end{align}
\Cref{eqn:approx_pi_c_complete} concludes the proof of \Cref{lemma:davg_pavg_lambda_approx}.
\end{proof}

The following \Cref{lemma:P_i_for_every_position_lambda_approx} depends on the Lemma 7 from \cite{DBLP:journals/tcs/BetzlerFGNR09}.

\begin{lemma}[$\star$]
\label{lemma:P_i_for_every_position_lambda_approx}
 $|P_{i}| \le 4\lambda d -1 ~~\forall i \in [m-1]_{0}$
\end{lemma}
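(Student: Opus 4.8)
The plan is to replay the argument behind Lemma~7 of~\cite{DBLP:journals/tcs/BetzlerFGNR09} (equivalently, the bound $|P_i|\le 4d$ invoked in \Cref{fpt:d}), replacing the exact-case displacement estimate by its $\lambda$-approximate counterpart from \Cref{lemma:davg_pavg_lambda_approx}. Concretely, for each position $i\in[m-1]_{0}$ I would take $P_i := \{c\in\CC : p_{avg}(c)-\lambda d < i < p_{avg}(c)+\lambda d\}$, i.e. the set of candidates that can legitimately occupy position $i$ in some $\lambda$-approximate ranking. This definition is exactly the one justified by \Cref{lemma:davg_pavg_lambda_approx}: if a $\lambda$-approximate ranking $\pi$ places $c$ at position $i$, then $\pi(c)=i$ and the lemma gives $|p_{avg}(c)-i|<\lambda d$, so $c\in P_i$. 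Contrapositively, candidates outside $P_i$ never appear at position $i$ in any $\lambda$-approximate ranking, which is precisely the property the dynamic program needs.

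First I would fix one $\lambda$-approximate ranking $\pi$; any optimal Kemeny ranking qualifies, since $\lambda\ge 1$ forces $\text{K-score}(\pi)=k_{OPT}\le\lambda\cdot k_{OPT}$. Next, for an arbitrary $c\in P_i$ I would bound its position $\pi(c)$ by combining two displacement estimates: the defining inequality $|p_{avg}(c)-i|<\lambda d$ of $P_i$, and the estimate $|p_{avg}(c)-\pi(c)|<\lambda d$ supplied by \Cref{lemma:davg_pavg_lambda_approx}. The triangle inequality then yields $|\pi(c)-i|<2\lambda d$, so every candidate of $P_i$ lands, under the fixed $\pi$, inside the open window $(i-2\lambda d,\ i+2\lambda d)$ of width $4\lambda d$ centred at the integer $i$.

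Finally I would exploit injectivity of the map $c\mapsto\pi(c)$: distinct candidates occupy distinct integer positions in $\pi$, so $|P_i|$ is at most the number of integers strictly inside $(i-2\lambda d,\ i+2\lambda d)$. Since this window is centred at the integer $i$ and has width $4\lambda d$, that count is at most $4\lambda d-1$, which gives $|P_i|\le 4\lambda d-1$ for every $i$, as claimed. Apart from the choice of $P_i$, this is a direct transcription of the exact-case argument with $d$ replaced throughout by $\lambda d$.

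The delicate point I expect to be the main obstacle is precisely this last integer-counting step: one must carry the \emph{strict} inequalities out of \Cref{lemma:davg_pavg_lambda_approx} (which, tracing equation~\eqref{eqn:k_OPT_n_d} through equation~\eqref{eqn:proof_of_K_score_pi}, actually hold strictly) and combine them with the integrality of positions to pin down the constant $4\lambda d-1$ rather than settling for a coarser $O(\lambda d)$ window bound. Everything upstream—availability of a $\lambda$-approximate witness, the definition of $P_i$, and the triangle-inequality confinement—is routine once \Cref{lemma:davg_pavg_lambda_approx} is in hand.
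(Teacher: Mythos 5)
Your proposal is correct and takes essentially the same route as the paper: both arguments combine the defining inequality of $P_i$ with the displacement bound of \Cref{lemma:davg_pavg_lambda_approx} to confine every candidate of $P_i$ to a window of at most $4\lambda d - 1$ integer positions around $i$ in a fixed $\lambda$-approximate ranking, and then invoke injectivity of the position map. The only difference is presentational—the paper phrases the counting step as a contradiction (assuming $|P_i| \ge 4\lambda d$), whereas you state the bound directly—and your version even shares the same implicit reliance on integrality of $2\lambda d$ in the final count.
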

\begin{proof}
We prove this lemma by contradiction. For this, we assume that for a position $i$, we have $|P_i| > 4 \lambda d$. Every candidate $c \in P_i$ has at most $2 \lambda d - 1$ different positions around its average position in a $\lambda$-approximately optimal Kemeny consensus $\pi$ based on the proof of  \Cref{lemma:davg_pavg_lambda_approx}. In \Cref{lemma:davg_pavg_lambda_approx} we have established that   $p_{avg}\left( c \right) - \lambda \cdot d < \pi(c) < p_{avg}\left( c \right) + \lambda \cdot d$. Hence, only those candidates in $\pi$ have $i$ as their common position for which
    \begin{align}
     |i - \pi(c) | &\le 2 \lambda d - 1\\
     \Rightarrow i - (2 \lambda d -1) &\le \pi(c) \le i + (2 \lambda d - 1) \label{i_2_lambda_d-1_pi_c_lambda_approx}
    \end{align}
     \par Since, our assumption is $|P_i| \ge 4 \lambda d$, therefore, each of these $4 \lambda d$ candidates must hold a position which differs at most by $2 \lambda d - 1$ around position $i$. But from \Cref{i_2_lambda_d-1_pi_c_lambda_approx}, we know that only those candidates in $\lambda$-approximately optimal Kemeny consensus $\pi$ qualify for $P_i$ whose $\pi(c)$ lies in the range $2 \lambda d - 1$ left and $2 \lambda d - 1$ right around position $i$. Therefore, we have $4 \lambda d - 1$ such positions. Hence, we approach towards contradiction.
\end{proof}

Having proved \Cref{lemma:davg_pavg_lambda_approx} we are now ready to design a dynamic programming algorithm in the following \Cref{fpt:approx_d} which has found its significant importance in this current setting.

\begin{theorem}\label{fpt:approx_d}
 There exists an FPT dynamic programming algorithm for \DisApproxKRA parameterized by both $\lambda \text{ and } d$ which runs in time $\mathcal{O} ^{*} ( 16^{\lambda d} )$.
\end{theorem}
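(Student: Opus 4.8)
The plan is to mimic the dynamic-programming algorithm of \Cref{fpt:d} verbatim, replacing every structural bound that came from the exact (optimal) setting with its $\lambda$-approximate counterpart supplied by \Cref{lemma:davg_pavg_lambda_approx,lemma:P_i_for_every_position_lambda_approx}. Recall that in \Cref{fpt:d} the whole running time hinged on the bound $|P_i|\le 4d$, which came from Lemma~7 of \cite{DBLP:journals/tcs/BetzlerFGNR09}; here the analogous bound is $|P_i|\le 4\lambda d - 1$ from \Cref{lemma:P_i_for_every_position_lambda_approx}. So the first step is to define, for each position $i\in[m-1]_0$, the set $P_i \coloneqq \{c\in\CC \mid p_{avg}(c)-\lambda d < i < p_{avg}(c)+\lambda d\}$, and invoke \Cref{lemma:davg_pavg_lambda_approx} to guarantee that in any $\lambda$-approximate Kemeny ranking, the candidate occupying position $i$ must lie in $P_i$. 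This is the crucial structural fact that makes the state space small.

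Next I would set up the same three-dimensional table $\TT(i,c,P_i')$ indexed by $i\in[m-1]_0$, a candidate $c\in P_i$, and a subset $P_i'\subseteq P_i\setminus\{c\}$ recording which candidates of $P_i$ have already been placed before position $i$; by \Cref{lemma:P_i_for_every_position_lambda_approx} the number of such subsets is at most $2^{|P_i|}\le 2^{4\lambda d-1}$, so the table has $\OO^*(2^{4\lambda d}) = \OO^*(16^{\lambda d})$ entries. At each entry, instead of storing a single minimum partial Kemeny score (as one would when computing a single ranking), I store a sorted sequence of the first $\min(r,\text{poly})$ partial scores together with enough bookkeeping to reconstruct the corresponding partial rankings, exactly as in \Cref{fpt:d}. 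The update step at $\TT(i,c,P_i')$ iterates over the predecessor candidate $c'\in P_i'\cup F(i)$ and combines $\TT(i-1,c',(P_i'\cup F(i))\setminus\{c'\})$ with the partial Kemeny cost $\text{pK-score}(c,\cdot)$ of placing $c$ at position $i$; the only change from \Cref{fpt:d} is that the forget/introduce sets $F(i)$ and $I(i)$ are now computed with respect to the $\lambda$-widened $P_i$'s.

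The correctness argument splits into two parts. First, soundness: every ranking enumerated respects the unanimity order (because the $P_i$ structure and the DP transitions only ever place a candidate after all candidates unanimously preferred to it) and has Kemeny score equal to the accumulated partial scores, which equals its true Kemeny score by the telescoping identity $\text{K-score} = \sum_i \text{pK-score}$. Second, completeness: by \Cref{lemma:davg_pavg_lambda_approx} every $\lambda$-approximate ranking places its position-$i$ candidate inside $P_i$, hence every such ranking is realizable as a root-to-leaf path in the DP and will be captured, provided we keep at least $r$ of the lowest-scoring partial extensions at each cell — this is where storing a sorted list of size $\min(r,\cdot)$ rather than a single optimum is essential. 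I would then output the first $\min\{r,|K(\Pi,\lambda k_{OPT})|\}$ rankings read off from the final entries $\TT(m-1,c,P_{m-1}\setminus\{c\})$, discarding any whose score exceeds $\lambda k_{OPT}$ (where $k_{OPT}$ is found as in \Cref{corollary:Dis_Approx_fpt_m}).

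The main obstacle I anticipate is the completeness proof for the \emph{distinct/enumeration} version: in the single-solution algorithm of \cite{DBLP:journals/tcs/BetzlerFGNR09} it suffices to keep one optimal partial extension per cell, but to guarantee we recover \emph{all} (up to $r$) low-score full rankings, I must argue that keeping the $\min(r,|forg(i)\uplus P_i'|!)$ best partial extensions at each cell never prematurely discards a partial ranking that could extend to one of the $r$ desired outputs. The subtlety is that a partial ranking with a slightly worse partial score could in principle extend to a better full ranking — but this cannot happen here, because the cost contributed by the already-placed prefix is independent of how the suffix is completed (the telescoping decomposition is additive and the suffix cost depends only on the cell index, not on the internal order of the prefix). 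This is exactly the argument used in \Cref{fpt:d} and in Lemma~8 of \cite{DBLP:journals/tcs/BetzlerFGNR09}, and I would cite it rather than re-derive it; the running time then follows immediately as $\min(r,4\lambda d)\cdot(4\lambda d + nm\log m)$ per entry times $\OO^*(16^{\lambda d})$ entries, giving $\OO^*(16^{\lambda d})$ overall.
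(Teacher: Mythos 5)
Your proposal is correct and takes essentially the same route as the paper: the paper's own proof of \Cref{fpt:approx_d} consists precisely of invoking \Cref{lemma:P_i_for_every_position_lambda_approx} (the bound $|P_i|\le 4\lambda d-1$) and reusing the dynamic program of \Cref{fpt:d} with the $\lambda$-widened sets $P_i$, which is exactly the substitution you carry out. Your write-up is in fact considerably more explicit than the paper's two-line argument --- spelling out the role of \Cref{lemma:davg_pavg_lambda_approx}, the enumeration-completeness argument via additivity of the partial-score decomposition, and the final filtering against $\lambda \cdot k_{OPT}$ --- but these are details the paper leaves implicit rather than a different approach.
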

\begin{proof}
 From the proof of \Cref{lemma:P_i_for_every_position_lambda_approx} and using similar arguments as in the proof of \Cref{fpt:d} we can conclude the proof of \Cref{fpt:approx_d}.
\end{proof}

\section{Concluding Remarks and Future Work}

We consider the problem of finding distinct rankings that have a good Kemeny score in either exact or approximate terms, and propose algorithms that are tractable for various natural parameterizations of the problem. We show that many optimal or close to optimal solutions can be computed without significant increase in the running time compared with the algorithms to output a single solution, which is in sharp contrast with the diverse version of the problem. We also establish a complete comparison between the five natural parameters associated with the problem, and demonstrate these relationships through experiments. 

We propose three main themes for future work. The first would be to extend these studies to other voting rules, and possibly identify meta theorems that apply to classes of voting rules. The second would be to understand if the structural parameters that we studied are correlated with some natural distance notion on the solution space: in other words, for a given distance notion, do all similar-looking instances have similar parameter values? Finally, we would also like to establish algorithmic lower bounds for the question of finding a set of diverse solutions that match the best known algorithms in the current literature.

\section*{Ethical Statement}

This paper adheres to the principles of research ethics, research integrity, and social responsibility. The research conducted in this paper is purely theoretical, and no human or animal subjects were involved in the research. Therefore, there were no ethical concerns related to the treatment of participants or the use of personal data.

We acknowledge that scientific research has the potential to impact society in significant ways, and we recognise our responsibility to consider the potential consequences of our research. We have taken steps to ensure that our research adheres to ethical and legal standards, and that it does not promote or contribute to harmful or unethical practices.

We have also taken measures to ensure the accuracy and reliability of our research results. We have followed established research protocols, and we have taken steps to prevent bias or conflicts of interest from influencing our findings.

Overall, we are committed to conducting research that is socially responsible and that contributes to the advancement of theoretical computer science in a responsible and ethical manner.

\bibliographystyle{alpha}
\bibliography{ArxivKemeny}
\end{document}